\newtheorem{theorem}{Theorem}
\newtheorem{proposition}[theorem]{Proposition}
\newtheorem{definition}[theorem]{Definition}
\newtheorem{algorithm}[theorem]{Algorithm}
\newtheorem{claim}[theorem]{Claim}
\newtheorem{lemma}[theorem]{Lemma}
\newtheorem{conjecture}[theorem]{Conjecture}
\newtheorem{corollary}[theorem]{Corollary}
\newtheorem{remark}[theorem]{Remark}
\newtheorem*{theorem*}{Theorem}
\newtheorem*{lemma*}{Lemma}
\newcommand{\nc}{\newcommand}
\nc{\rnc}{\renewcommand}
\def\ba#1\ea{\begin{align}#1\end{align}}
\def\bas#1\eas{\begin{align*}#1\end{align*}}
\def\bpm#1\epm{\begin{pmatrix}#1\end{pmatrix}}
\nc{\nn}{\nonumber}
\nc{\eq}[1]{(\ref{eq:#1})}
\nc{\eqs}[2]{(\ref{eq:#1}) and (\ref{eq:#2})}
\def\begsub#1#2\endsub{\begin{subequations}\label{eq:#1}\begin{align}#2\end{align}\end{subequations}}
\nc\qand{\qquad\text{and}\qquad}
\nc\mnb[1]{\medskip\noindent{\bf #1}}
\nc\benum{\begin{enumerate}}
\nc\eenum{\end{enumerate}}
\newcommand{\bea}{\begin{eqnarray}}
\newcommand{\eea}{\end{eqnarray}}
\nc\bit{\begin{itemize}}
\nc\eit{\end{itemize}}
\nc{\ot}{\otimes}
\rnc{\L}{\left} 
\nc{\R}{\right}
\DeclareMathOperator*{\E}{\mathbb{E}}
\def\R{\mathbb{R}}
\def\C{\mathbb{C}}
\def\poly{{\rm poly}}
\def\log{{\rm log}}
\def\Per{\rm Per}
\newcommand{\be}{\begin{eqnarray}}
\newcommand{\ee}{\end{eqnarray}}
\newcommand{\twodots}{..}
\newcommand{\ignore}[1]{}
\newcommand{\eps}{\varepsilon}
\renewcommand{\epsilon}{\varepsilon}
\nc{\hin}{h_{\text{in}}}
\nc{\pin}{\partial_{\text{in}}}
\nc{\pell}{\partial_{\ell}}
\newcommand{\nocontentsline}[3]{}
\newcommand{\tocless}[2]{\bgroup\let\addcontentsline=\nocontentsline#1{#2}\egroup}
\newcommand{\cftsectionprecistoc}[1]{\addtocontents{toc}{%
  {\leftskip \cftsecindent\relax
   \advance\leftskip \cftsecnumwidth\relax
   \rightskip \@tocrmarg\relax
   \textit{#1}\protect\par}}}
\begin{document}

\title{
		%\vspace{-1in} 	
		\huge Approximating the Permanent of a Random Matrix with Vanishing Mean \\
}
\author{Lior Eldar \thanks{eldar.lior@gmail.com}, Saeed Mehraban 
\thanks{{MIT Computer Science and Artificial Intelligence Laboratory (CSAIL)}, mehraban@mit.edu}}

\date{\today}

\maketitle

\begin{abstract}
The permanent is $\#P$-hard to compute exactly on average for natural random matrices including matrices over finite fields or Gaussian ensembles. 
Should we expect that it remains $\#P$-hard to compute on average if we only care about approximation instead of exact computation?

In this work we take a first step towards resolving this question: 
We present a quasi-polynomial time deterministic algorithm for approximating the permanent of a typical $n\times n$ random matrix with unit variance and vanishing mean $\mu=O(\ln \ln n)^{-1/8}$  
to within inverse polynomial multiplicative error.
Alternatively, one can achieve permanent approximation for matrices with mean $\mu = 1/\poly\log(n)$ in time $2^{O(n^{\eps})}$, for any $\eps>0$.

The proposed algorithm significantly extends the regime of matrices for which efficient approximation of the permanent is known.
This is because unlike
previous algorithms which require a stringent correlation between the signs of the entries of the matrix \cite{Barvinok2016, JSV04} it can tolerate random
ensembles in which this correlation is negligible (albeit non-zero).
Among important special cases we note:
\begin{enumerate}
\item
\textbf{Biased Gaussian}: each entry is a complex Gaussian with unit variance $1$ and mean $\mu$.
\item
\textbf{Biased Bernoulli}: each entry is $-1 + \mu$ with probability $1/2$, and $1$ with probability $1/2$.
\end{enumerate}
These results counter the common intuition that the difficulty of computing the permanent, even approximately, stems merely from our
inability to treat  matrices with many opposing signs.
The Gaussian ensemble approaches the threshold of a conjectured hardness \cite{AA13} of computing the permanent of a zero mean Gaussian matrix.
This conjecture is one of the baseline assumptions of the BosonSampling paradigm that has received vast attention in recent
years in the context of quantum supremacy experiments.

We furthermore show that the permanent of the biased Gaussian ensemble is $\# P$-hard to compute exactly on average. 
To our knowledge, this is the first natural example of a counting problem that becomes easy only when average case and approximation are combined.

On a technical level, our approach stems from a recent approach taken by Barvinok \cite{Barvinok2016,Barvinok2013,Bar16,BS17}
who used
Taylor series approximation of the logarithm of a certain univariate polynomial related to the permanent.
Our main contribution is to introduce an average-case analysis of such related polynomials. 
We complement our approach with a new technique for iteratively computing a Taylor series
approximation of a function that is analytical in the vicinity of a curve in the complex plane. 
This method can be viewed as a computational version of analytic continuation in complex analysis.
\end{abstract}

\tocless

\section{Introduction}

\subsection{Complexity of computing the permanent}

The permanent of an $n\times n$ matrix $A$ is the following degree $n$ polynomial in the entries of $A$:
\be
\Per(A):= \sum_{\sigma\in S_n} \prod_{i=1}^n A_{i,\sigma(i)},
\ee
{where $S_n$ is the symmetric group over $n$ elements}. Its computation has been the subject of intense research \cite{Bar16,AA13,JSV04,Gur05,Val79,LSW00} 
and has been connected to subjects ranging from random sampling of bi-partite matchings \cite{JSV04}
to establishing a so-called ``quantum supremacy'' using linear optical experiments \cite{AA13}.

Much of this interest is centered around the computational complexity of computing the permanent.
The permanent is known to be $\#P$-hard to compute exactly \cite{Val79, A2011} 
and we only know exponential time algorithms for computing the permanent of a general matrix $A$, the fastest of which is the
Ryser formula.

Because of this hardness in the worst case, research has also focused on
computing an approximation for the value of permanent:
for multiplicative approximation,
approximation schemes are known for several special cases of matrices.
Perhaps most prominently is the work
of Jerrum, Sinclair and Vigoda
\cite{JSV04} who showed a randomized polynomial time algorithm to compute a $1 + 1/\poly(n)$ multiplicative approximation
of the permanent for matrices with non-negative entries.
More recently \cite{AGGS17} have shown how to approximate the permanent of a PSD
matrix to simply exponential factor in polynomial time.

Still, if one allows the matrix to have an arbitrary number of negative values it is even $\#P$-hard to compute the sign of the permanent \cite{A2011}, which rules out a multiplicative approximation. This suggests that part of the computational hardness of computing the permanent comes from alternating signs in a matrix. Hence, for general matrices, it seems that efficient approximation of the permanent remains well out of reach
and progress is limited by a an ``interference barrier'' by which 
the positive and negative entries of a matrix may generate an intricate interference pattern that is hard
to approximate.

{Given the apparent difficulty in computing the permanent exactly and approximately one may ask a different question: Is the computation of permanent still hard on a faction of inputs?
It turns out that it is still difficult to compute the permanent even on a small fraction of inputs. {For example, it has been shown \cite {cps99} that the permanent of a matrix over a finite field is $\#P$-hard to compute exactly even on a $1/ \poly(n)$ fraction of such matrices}.
Specifically for the case of complex Gaussian matrices a somewhat weaker statement is known:
It is $\#P$-hard to compute $\Per(A)$ exactly for a Gaussian $A$ w.p. greater than $3/4 + 1/\poly(n)$ \cite{AA13}.  }

Faced with the apparent robustness of the complexity of computing the permanent against both error on a fraction
of inputs and against approximation, Aaronson and Arkhipov \cite{AA13}
designed a beautiful paradigm called ${\rm BosonSampling}$ for demonstrating a so-called quantum supremacy over classical computers.
It hinges upon permanent computation remaining $\#P$-hard even when we simultaneously
allow error on some inputs (see table \ref{table:1}), and allow an approximate value for the inputs for which we do handle.
The intuitive difficulty of approximating the permanent on such a distribution stems  from the same "interference barrier" described above.
Namely,  that quantum computers,
using complex amplitudes encoded in quantum states can "bypass" this barrier naturally, whereas classical computers fall short.

The range of parameters in our result
approaches 
the threshold of this conjectured hardness.
Hence,  it raises the intriguing question
of whether there exists a yet undiscovered phenomenon,
that occurs only for 
%inverse polynomial values of the 
zero mean, which prohibits efficient approximation, or whether
that conjecture is false. 
We discuss this further in Section \ref{sec:BS}.
\begin{table}
\captionsetup{font=scriptsize}
\captionsetup{width=13cm}

\centering
\captionsetup{width=.8\linewidth}
\begin{tabular}{ |c|  c|  c| }
\hline
   & worst case & average case \\
   \hline
 exact & $\#P$-hard & $\#P$-hard \\
 \hline
   approximate & $\#P$-hard &? \\
   \hline
\end{tabular}
\caption{The computational complexity of computing the permanent of a complex Gaussian matrix. We know that: Permanent is $\#P$-hard to compute exactly in the worst case or average case. We also know that permanent is $\#P$-hard to compute exactly on average. However, nothing is known about the approximation of permanent in the average case. Our result demonstrates that the permanent can be approximated in quasi-polynomial time if the Gaussian ensemble has non-zero but vanishing mean.}
\label{table:1}
\end{table}

\subsection{A complex function perspective}\label{sec:barvinok}

Recent works of Barvinok \cite{Barvinok2013,Bar16,Barvinok2016, BS17, Barvinok2018}
have outlined a new approach to computing the permanent,
and in fact a large class of high-degree polynomials in matrices
such as the partition function
\cite{PATEL2017,LSS17}.
His approach, stated here for the permanent of an $n\times n$ matrix $A$, is quite intuitive:
instead of trying to compute $\Per(A)$ directly, one computes an additive approximation of $\ln(\Per(A))$
and then exponentiates the result.
Let $J$ be the all ones $n\times n$ matrix.
The approximation of $\ln(\Per(A))$ is then computed as a Taylor series approximation of the complex log
of the univariate polynomial $g_A(z) = \Per(J \cdot (1-z) + z \cdot A)$ around the point $z=0$, and evaluated at point $z=1$.
The crucial point is that one can compute the lowest $m$ derivatives of $f$ at point $z=0$ relatively
efficiently, i.e., in time $n^{O(m)}$ 
since they correspond essentially to a linear combination of the permanents of sub-matrices
of $A$ of size at most $m\times m$.
The additive approximation error of the Taylor series expansion decays exponentially fast in $m$ so choosing $m = O(\ln(n))$
implies an algorithm for a multiplicative error of $1 + 1/\poly(n)$ that runs in time at most ${n\choose m} = 2^{O(\ln^2(n))}$.

In order to apply this Taylor series technique there is a significant limitation that must be carefully addressed: The Taylor approximation
of $\ln(g(z))$
about a point $z = x_0$ is valid only in a disk around $x_0$ that contains no poles of $f(z)$ (the roots of $g(z)$). 
Therefore, this approach inherently requires knowledge about the location of the roots
of the univariate polynomial used for interpolating the value of the permanent from
an easy-to-compute matrix to the target matrix $A$.

Using combinatorial arguments Barvinok characterized the location of the roots of $g_A(z)$ for certain
complex matrices: For example those that that satisfy $\max_{i,j} |A_{i,j} -1| \leq 0.5$ \cite{Barvinok2016}
and diagonally dominant matrices \cite{Barvinok2018}.
This then implied quasi-polynomial time algorithms for these classes of matrices.

Hence, his approach is the first to break the "sign barrier'' for approximating the permanent - i.e.
the ability to approximate the permanent for matrices that have many entries with opposing signs,
thus extending our ability to compute the permanent beyond matrices with non-negative entries
for which the  algorithm by Jerrum, Sinclair and Vigoda \cite{JSV04} is famously known for.
Still, this divergence from non-negative entry matrices was only quite mild: The entries of the matrix types that the algorithm handles
are still highly correlated in a specific direction and so have a very negligible interference pattern.

That said, this approach opens up a wide range of possibilities for computing these otherwise
intractable polynomials: Instead of thinking about them as combinatorial objects,
one can ask a completely different set of questions:
what can be said about the location of the roots of $g_A(z)$?
can one detour around these roots in order to reach ``interesting''
points $z$ where the value of $g_A(z)$ is non-trivial?
Yet another set of questions can then be phrased for a ``random ensemble'' of matrices $A$:
what is the behavior of the roots of $g_A(z)$ for typical $A$ and can they be detoured ``on average'', in an efficient way?
Answering such questions analytically, and subsequently efficiently as an algorithm is the focus of our work. 

\subsection{Main results}

Consider a random matrix where each entry is sampled independently from a distribution of complex valued random variables with mean $0$ and variance $1$. We refer to such matrix a random matrix with mean $0$ and variance $1$. 
An example of such a matrix is a Bernoulli or Gaussian matrix.

\begin{theorem*}[Informal statement of Theorem \ref{thm:main}]
%\textbf{Multiplicative approximation of permanents of random matrices:}
%\noindent
Let $n$ be some sufficiently large integer and $\mu =1/ \poly\log \log n$. 
Let $A$ be an $n \times n$ random matrix with mean $\mu$ and variance $1$. 
There is a deterministic quasi-polynomial time algorithm that for $1-o(1)$ fraction of random matrices $A$ outputs a number that is within inverse polynomial relative error of the permanent $\Per (A)$.
\end{theorem*}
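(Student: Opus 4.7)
The plan is to follow Barvinok's logarithmic Taylor expansion paradigm applied to $g_A(z) := \Per(J(1-z) + zA)$, which satisfies $g_A(0) = n!$ and $g_A(1) = \Per(A)$, but to adapt it to the average-case regime with vanishing mean. Two obstacles make the worst-case Barvinok analysis inapplicable here: when $\mu$ is very small the typical value of $\Per(A) = g_A(1)$ itself shrinks (for $\mu = 0$ it is a zero-mean random variable), so roots of $g_A$ typically cluster close to $z=1$; and even elsewhere in $[0,1]$ there is no a priori reason that $g_A(z)$ is uniformly bounded away from zero in a disk containing the interval. Consequently, I would replace the single Taylor expansion of $\ln g_A$ about the origin by a sequence of expansions along a carefully chosen curve $\gamma \subset \C$ from $0$ to $1$ that with high probability over $A$ avoids the zero set of $g_A$.

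The structural heart of the argument is an average-case analysis of the zeros of $g_A$. Writing $A = \mu J + B$ for a centered random matrix $B$, one has $g_A(z) = \Per(J(1-z(1-\mu)) + zB)$, so the zeros reflect the balance between a deterministic rank-one term and a random component weighted by $z$. The plan is to show, via anticoncentration together with a union bound over a fine net in $\C$, that with probability $1 - o(1)$ over $A$ there exists a piecewise linear curve $\gamma$ from $0$ to $1$ of length $O(1)$ and a region $\Omega \supset \gamma$ satisfying
\bas
\dist\bigl(\gamma,\,\{z \in \Omega : g_A(z) = 0\}\bigr) \;\ge\; \delta,
\eas
for some $\delta = \delta(\mu, n)$ that is at least inverse polylogarithmic once $\mu \gtrsim 1/\poly\log\log(n)$. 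One must also control the growth of $|g_A|$ along $\gamma$ so that $|\ln g_A(z)|$ remains of polylogarithmic size; this both keeps the logarithm well defined along the curve and ensures that the Taylor truncation error stays under control.

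Given such a $\gamma$, the algorithmic part is a \emph{computational analytic continuation}. Choose centers $z_0 = 0, z_1, \ldots, z_k = 1$ spaced along $\gamma$ by increments strictly smaller than $\delta$, and at each $z_i$ maintain a truncated Taylor expansion $T_i$ of $\ln g_A$ of order $m = O(\poly\log(n))$. At $z_0$ the coefficients are obtained from the first $m$ derivatives of $g_A$ at $0$, each of which is an explicit linear combination of permanents of at most $m \times m$ submatrices of $A$ and hence deterministically computable in $n^{O(m)}$ time. Because $T_i$ converges in the disk of radius $\delta$ about $z_i$, the values and first $m$ derivatives of $\ln g_A$ at the nearby center $z_{i+1}$ can be extracted from $T_i$ by elementary polynomial manipulations; these seed $T_{i+1}$ without recomputing anything directly from $g_A$. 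Evaluating $T_k$ at $1$ and exponentiating then yields the desired approximation to $\Per(A)$.

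I expect the principal obstacle to be the root-location step in the vanishing-mean regime: proving that for truly random $A$ with $\mu = 1/\poly\log\log(n)$, the polynomial $g_A$ admits such a detour $\gamma$ with inverse polylogarithmic clearance and probability $1 - o(1)$. The Barvinok-style bounds for fixed matrix classes depend on entrywise constraints destroyed by the noise in $B$, so a genuinely probabilistic argument is needed, likely combining moment or anticoncentration estimates for $\Per(\mu J + z B)$ as a polynomial in the entries of $B$ with a covering argument in the $z$-plane to rule out clustering of roots near $\gamma$. The clearance $\delta$ drives the required Taylor order $m$ and hence the complexity $n^{O(m)}$, so the threshold $\mu = 1/\poly\log\log(n)$ should emerge from balancing the anticoncentration margin against the fineness of the net and the length of the interpolation path.
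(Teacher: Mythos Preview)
Your high-level plan matches the paper's: Barvinok's $\ln$-Taylor paradigm, average-case root analysis of a permanent-interpolating polynomial, and computational analytic continuation along a root-avoiding curve. But two of the three technical pillars are left as gaps, and the paper fills them in specific ways you do not anticipate.

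\textbf{Root location.} You propose anticoncentration at points of a net plus a union bound. The paper does something quite different and, crucially, avoids anticoncentration entirely (anticoncentration of the Gaussian permanent is itself conjectural). It works with $g_{A'}(z)=\Per(J+zA')$ for centered $A'$, proves the second-moment bound $\E_{A'}|g_{A'}(re^{i\theta})|^2 \le (n!)^2 e^{r^2}$, and feeds this into \emph{Jensen's formula} to obtain $\E_{A'}[N_r] \le 4r^2$ for the number of roots in the disk of radius $r$. Thus with high probability there are no roots in $\mathcal{B}_\eps(0)$ and only $O(\eps^{-3})$ roots in $\mathcal{B}_{2/\eps}(0)$. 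A pigeonhole over a family of $\Theta(\eps^{-5})$ disjoint piecewise-linear tubes then exhibits one tube that is root-free with margin $w=\Theta(\eps^6)$ for a $1-O(\eps)$ fraction of $A'$. Your net-plus-anticoncentration sketch would need pointwise lower bounds on $|g_A(z)|$ for a degree-$n$ polynomial in the entries of $A$, which is not available unconditionally; the Jensen route sidesteps this by converting an \emph{upper} moment bound into a root-count bound.

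\textbf{Analytic continuation.} Your description keeps the same Taylor order $m$ at every center. This does not control error propagation: the $k$-th derivative at $z_{i+1}$, computed from an order-$m$ expansion at $z_i$, carries truncation error of size roughly $\beta^{-(m-k)}$ (with $\beta$ the ratio of clearance to step size), so derivatives with $k$ near $m$ have $O(1)$ error and corrupt all subsequent steps. The paper's fix is to \emph{shed} derivatives: at step $i$ it retains only $s_i$ of them, with $s_{i+1} \approx s_i \ln\beta / \ln(s_i/\Delta_{\min})$, starting from $s_0 = m = \ln n \cdot (c_0 \ln\ln n)^t$ and ending with $s_t = \Omega(\ln n)$. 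This recursion is precisely what forces $t = O(\ln\ln n/\ln\ln\ln n)$ and hence $\eps \ge (\ln\ln\ln n/\ln\ln n)^{1/7}$, i.e.\ $\mu = 1/\poly\log\log n$; the threshold you expect to ``emerge from balancing'' comes from this derivative-shedding recursion together with the $\Theta(\eps^6)$ clearance, not from an anticoncentration margin.

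A minor point: the paper interpolates $\Per(J+zA')$ for centered $A'$ from $z=0$ to $z=b\approx 1/\eps$ and then rescales by $b^{-n}$, rather than going from $0$ to $1$ with mean-$\mu$ input $A$; this makes the Jensen/moment analysis cleaner but is essentially a change of variables from your setup.
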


%\saeed{added: In Appendix \ref{section:averagevanishing} we show that computing the permanent of a vanishing mean matrix is indeed $\# P$-hard for $7/8 + 1/\poly(n)$ fraction of such matrices. Hence our result presents an example of a problem whose approximation is efficient on average even though the exact computation is not.}
We note that one can also achieve a mean value parameter of $\mu = 1/\poly\log n$ with a run time that is strictly faster than $2^{O(n^{\eps})}$ for any $\eps>0$.
See Remark \ref{remark:subexp}

One can ask whether perhaps allowing a non-vanishing mean devoids the permanent of its average-case hardness.
Thus we show a complementary claim whose proof appears in \cite{EM17}
\begin{theorem}
[Average-case hardness for the permanent of a nonzero mean Gaussian] Let $\mu > 0$ and let $\mathcal{O}$ be the oracle that for any $\mu' \geq \mu$ computes the permanent of $7/8 +1/\poly(n)$ fraction of matrices from the ensemble ${\cal N}^{n \times n}(\mu', 1, \C)$ exactly. Then $P^{\mathcal{O}} = P^{\# P}$.
\label{thm:exacthardness}
\end{theorem}
Hence our results establish a natural ensemble of matrices for which permanent approximation is efficient on average even though the exact computation is not.
This should be contrasted with the central BosonSampling assumption which is that permanent approximation remains hard for $0$-mean.

\subsection{Roots of random interpolating polynomials}

In this work we consider the Taylor series technique of Barvinok (see Section \ref{sec:barvinok}) 
in a random setting: We ask - given an ensemble of random matrices $A$
what can be said about the typical location of roots of some interpolating polynomial related to the permanent of $A$, say $g_A(z) = \Per((1-z) J + z A)$?

Notably, this question completely changes the flavor of the analysis: Instead of considering families of matrices with some fixed property (say diagonal dominance)
and analyzing them combinatorially, we consider random matrices, and then treat $g_A(z)$ as a random polynomial
which allows us to bring to bear techniques from analysis of random polynomials.

First, to simplify matters we consider instead the polynomial
\be
g_A(z) = \Per(J + z A)
\ee
and observe that for any non-zero $z$ if $A$ is a random matrix with mean $0$, then $g_A(z)$ is (up to normalization by $z^n$) the permanent of a biased version
of $A$ with mean $1/z$. 
Given this polynomial we ask - what is the distribution of roots of $g_A(z)$?
Our goal is to show that we can find a sequence of overlapping disks, whose radii is not too small, that allows us to
apply {\it analytic continuation} (see e.g. \cite{Ahlfors}) from the value of the function $g_A(z)$ at $z=0$ to $g_A(z)$ for some $|z|\gg1$.

A useful technique in the analysis of random polynomials is Jensen's formula which states that for an analytic function $g(z)$ with zeros $z_1,\hdots, z_n$
that is analytic in the complex disk of radius $r$ centered at the origin and $g(0)\neq 0$ we have
\be
\int_0^{2\pi} \ln(|g_A(r e^{i\theta})|) \frac{d\theta}{2 \pi}  - \ln(|g(0)|)  = \sum_{|z_j| \leq r} \ln \frac r {|z_j|}
\ee
In order to apply Jensen's formula we observe that the left hand side of the formula, for zero mean random matrices $A$, and $g_A(z) = \Per(J + z A)$
is essentially bounded from above by the the logarithm of the second moment $g_A(r)$.
%of the permanent biased matrices $A$ with mean $\mu = 1/z$.
We prove in Lemma \ref{lem:moment2} (which provides a simplified version of  a proof by Rempa{\l}a and Weso{\l}owski \cite{RW04})
that this second moment 
%of permanent of $\mu$-biased matrices $A$ 
is upper-bounded by a term that scales at most exponentially with the 
square of $r$:
\be
\mathbf{E}_{A}[|g_A(r)|^2] \leq (n!)^2 \cdot e^{r^2}
\ee
Together with Jensen's formula this bound implies two interesting facts about the roots of $g_A(z) = \Per(J + z A)$ summarized in Proposition \ref{prop:root}:
that typical matrices $A$ are such that $g_A(z)$ has no roots inside the disk of radius $|z|$ for $|z|\ll1$, and very few (say $O(\ln(n))$) roots
in the disk of radius $|z| = \sqrt{\ln(n)}$.

These two facts imply together the existence of analytic curves arching from $z=0$ to some value $z$, $|z|\gg1$ 
with the following property: For a typical $A$ from the distribution, these curves are at distance at least some small $\eps>0$ from any root of $g_A(z)$.
These curves are depicted in Figure \ref{fig:tubes}.
Proposition \ref{prop:root} implies that most curves of this form avoid all roots of $g_A(z)$ with margin at least $\eps$ for most $A$'s, so
our algorithm samples such a curve at random and use it to interpolate the value of $g(z)$ from $z=0$ to a large value of $|z|$.
%a random piecewise-linear curve (comprised of $2$ linear segments) that starts at $z=0$ and ends at some large real number $r$
%has no adjacent roots:  
%first, these curves have no nearby roots as they emerge from the disk of some small radius $\eps\ll1$
%and after they leave the disk, they are disjoint, hence by our Proposition \ref{prop:root} most of them have no nearby roots inside a much larger disk
%of radius $r\gg1$.
%For the special case of Gaussian matrices, the rotation invariance of the ensemble simplifies matters even further and allows us 
%to use mere linear segments as depicted in figure \ref{fig:linear}

\subsection{Turning into an algorithm}

To recap the previous section, our overall strategy is to compute a Taylor series expansion of the logarithm of a polynomial $g_A(z)$ related to the permanent of a random matrix $A$.
In order to do that, we characterize the location of the roots of $g_A(z)$ for a typical matrix $A$ which allows us to find simple curves in the complex plane which
are at distance at least some small but large enough $\eps>0$ from any root of $g_A(z)$ for most $A$'s.
This would imply that for such matrices $A$ the function $f_A(z) = \ln(g_A(z))$ is analytic on any point along these curves, up to radius $\eps$.

However, it is not immediately clear that this analytic continuation strategy can be turned into an algorithm. Suppose $g_A(z)$ is root-free within $\eps$-neighborhood of each point of the segment $[0,1]$.
In his work \cite{Bar16} Barvinok suggested composing $g_A(z)$ with an auxiliary polynomial $\phi$ corresponding to $e^{O(1/\eps)}$ terms in the Taylor expansion of $\eps \ln \frac{1}{1-z}$ around $z=0$. He showed that $g_A \circ \phi$ has indeed no roots inside a disk of radius $1 - e^{-O(1/\eps)}$. See lemma 8.1 of \cite{Bar16} and Section 2.2 of \cite{B16} for more details.

% had 
%to compose $g_A(z)$ with additional auxiliary maps
%suggested the use of conformal maps to possibly extend the Taylor series technique
%to the case where
%a root-free disk cannot be established around $z=0$ and one is required to use a "sleeve" (i.e. a root-free tube) in order to interpolate
%from $z=0$ to large $z$. (See lemma 8.1 of \cite{Bar16} and Section 2.2 of \cite{B16} for more details).
%
%Indeed, \lior{the use of auxiliary maps} conformal maps may provide an answer for example for very simple curves like the tube around the linear segment.
%For linear tubes (or strips) one can find, \lior{in fact,}  a conformal map from the disk onto the strip using a \saeed{added: polynomial approximation of the logarithm map $\ln \frac{1}{1-z}$} composition of the exponentiation map $e^z$ and the Mobius transform.

It is somewhat less clear however, whether one can use the auxiliary map strategy for tubes around more elaborate curves
like a quadratic curve, or a piecewise linear curve (which we consider in this work), and whether such maps would result in good error parameters. In order to extend Barvinok's approach to such curves one needs to compose $\phi$ with another auxiliary low-degree polynomial map which maps the $\eps$ neighborhood of $[0,1]$ to an $O(\eps)$ neighborhood of these curves.
We however use a different method which allows us to interpolation along an arbitrary (even non-differentiable) curve. 
We name this complementary algorithmic technique Computational Analytic Continuation.
% \saeed{add Barvinok's convolution idea, I guess that should give use $\frac 1 {(\ln\ln n)^{1/3}}$}

\subsubsection{Computational analytic continuation (CAC)}

CAC is an algorithm that for a degree $n$ polynomial $g(z)$ computes the value of $f(z) = \ln(g(z))$ at some value $z$ given an oracle access to derivatives of $f(z)$ at $z=0$,
and a curve $\gamma$, $\gamma(0)=0, \gamma(1)=z$ that is at distance at least some $\eps>0$ from all roots of $g(z)$.

Let $g(z)$ be a univariate polynomial $g:\C \rightarrow \C$ and assume we have an oracle which computes the derivatives of this function at $z =0$. 
Let $\gamma$ be some curve in the complex plane that is at distance at least $R$ from any root of $g(z)$.
We would like to approximate $f(\sigma) = \ln(g(\sigma))$ for some complex number
$\sigma = \gamma(1) \neq 0$,
using a Taylor series expansion of order $m$, that is small as possible.
For simplicity, we assume that $\gamma$ is piece-wise linear and
divide each segment of $\gamma$ into small intervals. We denote the entire sequence of intervals as $\Delta_1,\hdots, \Delta_t$.
For each $i$ the length of $\Delta_i$ is at most $R/\beta$ for all $i$, for some $\beta>1$.

We then use a sequence of iterations to compute many derivatives at each step:
at the first step $z=0$ we compute some $m$ derivatives, where $m$ is suitably chosen for small approximation error.
Then at the next step, we compute $ O(m/\ln(m))$ derivatives at point $\Delta_1$.
The update rule of $k$-th derivative at step $i$, denoted by $\hat{f}^{(k)}_i$  is merely the Taylor series approximation
of the $k$-th derivative $f^{(k)}$ as an analytical function to order $m$:
\be
\hat{f}^{(k)}_i \leftarrow \sum_{j=0}^m \frac{f^{(k+j)}_{i-1}\Delta_1^k}{ k!}.
\label{eq:8}
\ee
In general at each step $i$ we compute $s_i$ derivatives where the number of derivatives
compute at each step is reduced sharply with $i$:
\be
s_i \approx O(s_{i-1} / \ln(s_{i-1})).
\ee
We choose $m$ sufficiently large so that $s_t$ is sufficiently large to imply a $1/\poly(n)$ additive error in the final Taylor series approximation of $f^{(0)}_t$.
 Intuitively, if $t$ is the number of overlapping disks, then we need to fix $m \approx \ln n \times (\ln \ln n)^t$.

Since $\gamma$ is at distance at least $R$ from any root of $g(z)$, and the step sizes $\Delta_i$ are chosen to be 
sufficiently small compared to the convergence radius $R$ around the previous point $\sum_{j=0}^{i-1} \Delta_j$
it follows that the Taylor series  converges quickly to the true function value at each step.
We prove a quantitative estimate:

\begin{lemma}(Sketch)
Let $\Delta_{min} = \min_i |\Delta_i|$ and suppose that
$\Delta_{min} \leq R/\beta$ where $R$ is the convergence radius of $\ln(g(z))$ around point $\sum_{j=0}^{i-1} \Delta_j$,
minimized over all $1 \leq i < t$.
Consider the update rule in equation \ref{eq:8} that computes $s_i$ derivatives $f^{(k)}_i$ using $s_{i-1}$ previously
computed derivatives $f^{(k)}_{i-1}$.
Suppose that $s_0 = \ln \frac{n \sigma}{\delta \Delta} O(\ln 1/\Delta_{min})^{t}$ for some error parameter $\delta = 1/\poly(n)$.
Then 
\be
\left| \hat{f}^{(0)}_t - f(\sigma) \right| \leq \delta.
\ee
\end{lemma}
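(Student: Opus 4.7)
The plan is to induct on the iteration index $i$, tracking the worst-case normalised error across all retained derivatives at each step. Setting $z_i := \sum_{j<i}\Delta_j$, define
\[
\hat E_i \;:=\; \max_{0 \leq k < s_i}\,\frac{R^k}{k!}\,\bigl|\hat f^{(k)}_i - f^{(k)}(z_i)\bigr|.
\]
Since $\gamma$ stays at distance at least $R$ from every root of $g(z)$, the function $f(z)=\ln g(z)$ is analytic in the closed disk of radius $R$ around each $z_i$, and a standard bound on $|\ln g|$ combined with Cauchy's derivative estimate yields $|f^{(k)}(z_i)| \leq C_n\,k!/R^k$ for some $C_n = \poly(n)$. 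By hypothesis $\hat f^{(k)}_0 = f^{(k)}(0)$ exactly, so $\hat E_0 = 0$; the goal is to show that $\hat E_t \leq \delta$, which immediately implies the stated bound on $|\hat f^{(0)}_t - f(\sigma)|$.

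For the single-step analysis, expand $f^{(k)}(z_i) = \sum_{j\geq 0} f^{(k+j)}(z_{i-1})\,\Delta_i^j/j!$ and compare with the order-$(s_{i-1}-1-k)$ truncation that the algorithm actually computes from its approximate derivatives at $z_{i-1}$. The truncation tail, using $|\Delta_i|\leq R/\beta$ and the Cauchy bound, contributes (after normalising by $R^k/k!$) at most $O\bigl(C_n\,\beta^{-(s_{i-1}-s_i)}\bigr)$ uniformly in $k < s_i$. The error-propagation term, using the identity $(k+j)!/(k!\,j!) = \binom{k+j}{k}$, rewrites as
\[
\frac{R^k}{k!}\sum_{j=0}^{s_{i-1}-1-k}\frac{\bigl|\hat f^{(k+j)}_{i-1}-f^{(k+j)}(z_{i-1})\bigr|\,|\Delta_i|^j}{j!} \;\leq\; \hat E_{i-1}\sum_{j\geq 0}\binom{k+j}{k}\bigl(|\Delta_i|/R\bigr)^{j} \;=\; \frac{\hat E_{i-1}}{(1-|\Delta_i|/R)^{k+1}},
\]
which is at most $e\,\hat E_{i-1}$ provided the subdivision is fine enough that $|\Delta_i| \leq R/(2 s_{i-1})$. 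Combining the two bounds gives the clean per-step recursion $\hat E_i \leq (e+o(1))\,\hat E_{i-1} + O\bigl(C_n\,\beta^{-(s_{i-1}-s_i)}\bigr)$.

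Unrolling from $\hat E_0 = 0$ yields $\hat E_t = O(C_n)\sum_{i=1}^t e^{t-i}\beta^{-(s_{i-1}-s_i)}$, so it suffices to choose the schedule so that every single truncation gap $s_{i-1}-s_i$ is of order $t + \log(n/\delta)$. The rule $s_i \approx s_{i-1}/\ln s_{i-1}$ drops roughly $s_{i-1}(1-1/\ln s_{i-1})$ derivatives per step, comfortably meeting this requirement provided $s_t$ remains of order $\log(n\sigma/(\delta\Delta_{\min}))$; iterating the inverse map $y\mapsto y\ln y$ from $s_t$ back up to $s_0$ then pins down $s_0 = \log(n\sigma/(\delta\Delta_{\min}))\cdot O(\log 1/\Delta_{\min})^{t}$, matching the hypothesised setting. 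The main obstacle, and the source of the doubly-logarithmic overhead in the schedule, is the tension already visible in the single-step estimate: the propagation factor $(1-|\Delta_i|/R)^{-(k+1)}$ blows up with $k$ unless $|\Delta_i|$ is small on the scale $R/s_{i-1}$, and shrinking $|\Delta_i|$ forces more steps $t$, which in turn inflates $s_0$. The choice $s_i\approx s_{i-1}/\ln s_{i-1}$ hits the sweet spot where each individual Taylor truncation is geometric in the very large drop $s_{i-1}-s_i$, while the initial derivative budget $s_0$ still grows only as $\log(n\sigma/(\delta\Delta_{\min}))\cdot O(\log 1/\Delta_{\min})^{t}$.
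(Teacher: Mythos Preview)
Your approach differs from the paper's and, as written, has a gap in the propagation step. You claim the per-step recursion $\hat E_i\leq(e+o(1))\hat E_{i-1}+O(C_n\beta^{-(s_{i-1}-s_i)})$, but the propagation factor $(1-|\Delta_i|/R)^{-(k+1)}$ is only $O(1)$ under the extra constraint $|\Delta_i|\leq R/(2s_{i-1})$, which is \emph{not} a hypothesis of the lemma: the lemma (and its application in Theorem~\ref{thm:main}) has $|\Delta_i|\leq R/\beta$ with $\beta$ a fixed constant (in fact $\beta=e$). In that regime your propagation factor is $(\beta/(\beta-1))^{s_i+1}$, exponential in $s_i$. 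Shrinking $|\Delta_i|$ down to $R/(2s_{i-1})$, as you propose, would force $t$ to scale with $s_0$ and destroy the claimed parameter count $s_0=\ln n\cdot(c_0\ln\ln n)^t$.

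The paper avoids this by tracking the \emph{unnormalised} error $\delta_i=\max_{l\leq s_i}|\hat f^{(l)}_i-f^{(l)}_i|$, for which the propagation term is simply $\delta_{i-1}\sum_p|\Delta_i|^p/p!\leq\delta_{i-1}e^{|\Delta_i|}$, a genuine constant independent of $l$. The price is that the $l$-dependence migrates to the truncation tail $\kappa_{i,l}$; the paper bounds this not via Cauchy but via the explicit root formula $f^{(k)}=-\sum_j(k-1)!/z_j^k$, obtaining $\kappa_{i,l}\leq 3n(2s_{i-1}/|\Delta_i|)^l\beta^{-s_{i-1}}$, and the schedule $s_i=\tfrac{\ln\beta}{2}\cdot s_{i-1}/\ln(2s_{i-1}/\Delta_{\min})$ is chosen precisely so that $(2s_{i-1}/\Delta_{\min})^{s_i}\leq\beta^{s_{i-1}/2}$. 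Your normalised route could in principle be repaired by accepting the $(\beta/(\beta-1))^{s_i}$ propagation factor and then noting that $\sum_{j>i}s_j=O(s_{i+1})$ under the schedule, so the accumulated product $e^{O(s_{i+1})}$ is swamped by the truncation gain $\beta^{-\Theta(s_{i-1})}$; but that is a different argument from the one you wrote.
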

 
We then show that for a specific choice of parameters $\sigma = O(\ln n)$, $\Delta_{min} > 1/\poly\log (n)$, and $t= \frac{\ln\ln n}{\ln \ln \ln n}$ we get an inverse polynomial error by using a poly-logarithmic number of derivatives.
Since $e^{f^{(t)}} = g(\sigma) = \Per(J + \sigma A)$ for zero mean random matrix $A$
then $\sigma^{-n} e^{\hat{f}^{(0)}_t}$ is a $1 + 1/\poly(n)$ multiplicative approximation of the same random matrix but with
vanishing mean, i.e. $\mu = 1/\sigma$.

\begin{figure}
\center{
 \epsfxsize=3in
 \epsfbox{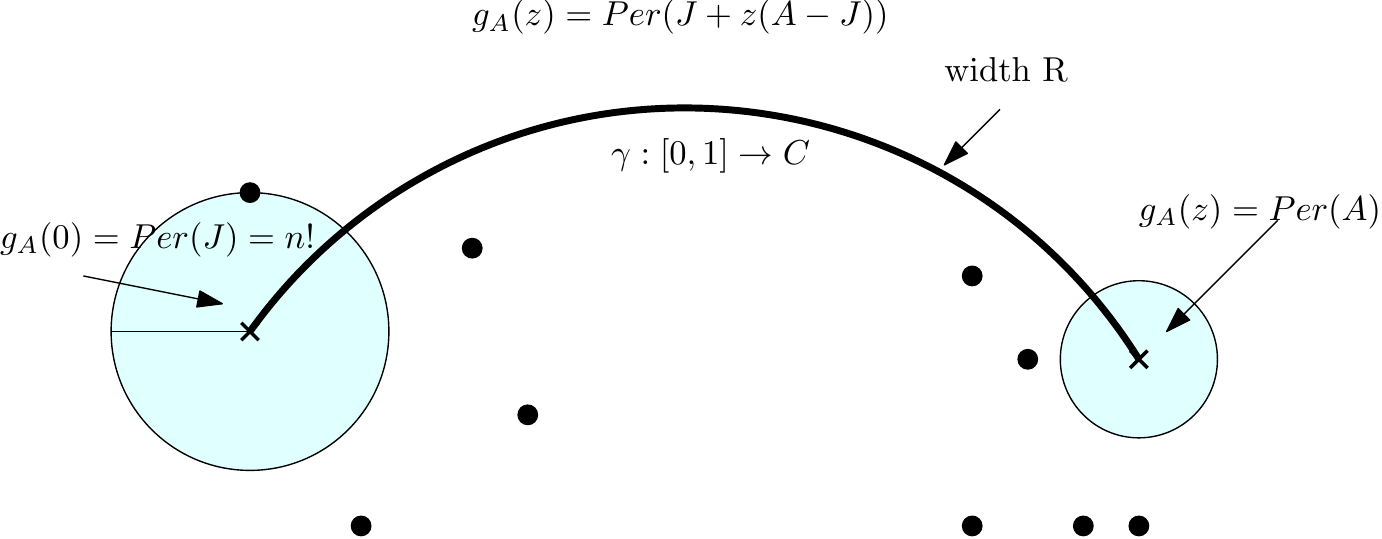}
 \caption{\footnotesize{The curve $\gamma$ connects $z=0$ and some value $z \neq 0$ and is at distance at least $R$
 from any root of $g_A(z)$.
 Note that interpolation along the curve in small segments can reach a lot further than taking a single interpolation
 step which is bounded by blue-shaded region - dictated by the location of the nearest root to the point $z=0$. 
 }\label{fig:zoo}}}
\end{figure}

\subsection{Discussion and future work}\label{Sec:further}

Our study extends the line of work pioneered by Barvinok that views the computation of the permanent of a matrix $A$
as a question about a related complex-valued polynomial $g_A(z)$.
In this work we allow $A$ to be a random matrix and hence recast the question about the permanent of a random matrix $A$
as a question about the location of roots of a random polynomial related to $A$.
We characterize the behavior of these polynomials for some random matrices $A$, and then provide
an algorithmic technique which allows us to turn this knowledge into a quasi-polynomial algorithm
for approximating the permanent of these matrices.

For a while now, it has been a folklore notion that the permanent is difficult to compute for general matrices
mainly because of the "sign problem" - namely the fact that entries with opposing signs generate intricate interference
patterns.
Such matrices avoid by definition the regime of matrices for which efficient algorithms are known, most prominent 
of which is the algorithm of Jerrum, Sinclair and Vigoda \cite{JSV04} for non-negative entry matrices.
Our work, we believe, places this notion in serious doubt - we show natural random ensembles with very little (i.e. vanishing) correlation
between any pair of entries - and yet, we are able to approximate the permanent for such matrices quite efficiently.
Hence, it seems that if in fact approximation of the permanent on average is a difficult task, it must be due to another, yet uncharacterized
phenomenon.
Furthermore, our study makes the hardness of approximating the permanent on average to an even more intriguing problem:
it seems that several natural ensembles do in fact admit an efficient approximation - but is it the case for other ensembles?
Most notably, one would like to consider the case of zero mean complex Gaussian matrices, the presumed hardness of which is the
basis for the ${\rm BosonSampling}$ paradigm, discussed in the following section.

\subsubsection{Implications to ${\rm BosonSampling}$}\label{sec:BS}

In \cite{AA13} the authors consider the following  computational problem:
\begin{definition}[{\rm GPE}$_{\times}^\mu$] Given $A \sim \mathcal{N}^{n\times n}(\mu,1,\C)$, $\epsilon, \delta$, output a number $Q$ such that with probability at least $1-\delta$, $\Big|Q- \Per(A)\Big|\leq \epsilon |\Per(A)|$ in $\poly(n,1/\epsilon,1/\delta)$.
\label{GPE1}
\end{definition}

\noindent
They conjecture that
\begin{conjecture}\label{conj:2}
${\rm GPE}_\times^0$ is $\#P$-hard to compute.
\end{conjecture}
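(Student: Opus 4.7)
The plan is to follow the Aaronson--Arkhipov template from the BosonSampling paper: reduce worst-case permanent computation to ${\rm GPE}_\times^0$ and inherit $\#P$-hardness from Valiant's theorem. The reduction splits into two essentially independent pieces that must be combined carefully.

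First, I would set up a robust polynomial-interpolation reduction from worst case to Gaussian average case. Given an arbitrary target matrix $M$, form the affine line $M(t) = A + t B$ where $A \sim \mathcal{N}^{n \times n}(0,1,\C)$ and $B$ is chosen so that $M(1)$ is a suitable multiple of $M$. For any fixed small $t$, the marginal distribution of $M(t)$ is close in total variation to the Gaussian ensemble, so the approximation oracle is valid there; and $\Per(M(t))$ is a degree-$n$ univariate polynomial in $t$. Querying the oracle at $O(n)$ nodes $t_1,\ldots,t_{O(n)}$ near $0$ and running a Berlekamp--Welch-style noisy polynomial reconstruction then recovers $\Per(M)$ from the values $\{\Per(M(t_i))\}$, provided each individual oracle call has small additive error and only a small fraction of the calls err.

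Second, I would bridge the gap between the multiplicative error provided by ${\rm GPE}_\times^0$ and the additive error demanded by interpolation. Since $\E[|\Per(A)|^2] = n!$ for Gaussian $A$, the natural scale of $|\Per(A)|$ is $\sqrt{n!}$. Under the Permanent Anti-Concentration Conjecture --- $\Pr_A[\, |\Per(A)| \geq \sqrt{n!}/\poly(n)\,] \geq 1-o(1)$ --- every $1/\poly(n)$ multiplicative approximation is automatically a $\sqrt{n!}/\poly(n)$ additive approximation on almost all inputs, which is precisely the regime the robust interpolation step can tolerate.

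The main obstacle is the Permanent Anti-Concentration Conjecture, which is open and is widely regarded as the deepest bottleneck in the entire BosonSampling program; without it there is no known way to convert multiplicative to additive error at any useful scale. A secondary subtlety is that moving along $M(t)$ shifts the distribution away from the target Gaussian ensemble, so $M$ must be appropriately scaled and the nodes $t_i$ concentrated near $t=0$ to keep the marginals within the oracle's average-case tolerance. Finally, in view of the present paper, any successful proof must isolate a phenomenon unique to $\mu = 0$: the Taylor interpolation algorithm developed here already handles $\mu = 1/\poly\log\log n$, with runtime degrading only as the interpolation distance $\sigma = 1/\mu$ diverges, so the conjecture is tantamount to asserting that this divergence reflects a genuine complexity-theoretic phase transition at $\mu = 0$ rather than a limitation of one algorithmic approach.
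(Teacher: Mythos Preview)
This statement is labeled \emph{Conjecture} in the paper, and the paper provides no proof of it. It is quoted from Aaronson and Arkhipov \cite{AA13} as an \emph{open} hypothesis underlying the BosonSampling paradigm; indeed, the entire thrust of the paper is that its algorithm approaches the threshold of this conjectured hardness and thereby casts some doubt on it (see Section~\ref{sec:BS}). So there is no ``paper's own proof'' to compare against.

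What you have written is not a proof but, as you yourself recognize, a conditional argument: it is exactly the Aaronson--Arkhipov reduction, and its validity hinges on the Permanent Anti-Concentration Conjecture, which is open. You identify this gap correctly. Your sketch therefore does not establish Conjecture~\ref{conj:2}; it only shows (as \cite{AA13} already does) that PACC together with worst-to-average-case interpolation would imply it. If the task was to prove the conjecture, this is not a proof; if the task was to compare with the paper, the paper simply states it as a conjecture and moves on.
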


Together with another conjecture on the anti-concentration of the permanent of complex Gaussian matrices
this conjecture implies that
\noindent BPP simulation of the linear-optical experiment called ${\rm BosonSampling}$ 
to within total variation distance
implies collapse of the polynomial hierarchy to the third level,
thereby establishing a so-called ``quantum supremacy'' of the outcomes of these physical experiments.
Using the same anti-concentration assumption on the permanent of zero mean Gaussian matrices
we explain 
 (see appendix \ref{app:bs}) that in fact the above conjecture is true also for complex Gaussian matrices with
mean $\mu = 1/\poly(n)$:
\be
\exists \mu = n^{-\Omega(1)}, \quad GPE_\times^{0} \preceq GPE_\times^\mu.
\ee
On the other hand, our main theorem implies that
\be
{\rm GPE}_\times^{1/\poly\log\log (n)} \in \mbox{DTIME}\left(2^{\poly\log(n)}\right),
\ee
and hence ${\rm GPE}_\times^{1/\poly\log\log(n)}$ is very unlikely to be $\# P$-hard.
This raises the following intriguing question: It seems that the hardness of the permanent
of complex Gaussian matrices (or general random matrices for that matter)
is not due to the common intuition that the different signs of the entries
prohibits combinatorial treatment of the matrix, as a graph, in the spirit of \cite{JSV04}. 
Hence, if indeed ${\rm GPE}_\times^0$ is hard there must exist another phenomenon governing the behavior of the permanent
of complex Gaussian matrices with mean values between  $\mu = 1/\poly(n)$ and $\mu = 1/\poly\log\log(n)$ which makes computation intractable.
%\lior{I'd remove this last sentence}
%More concretely, our study reveals that when $\mu = 1/\poly\log\log(n)$ knowing the permanent of small sub-matrices enables one to reliably approximate the permanent of the full matrix, at least for a typical matrix. We don't know if the same holds for $\mu = 1/\poly(n)$. 

%It also comes close to the value of mean where \cite{} conjectures $\#P$-hardness of typical Gaussian matrices 
%as a base for the BosonSampling paradigm.
%In addition, the CAC provides a sub-exponential algorithm for computing the partition function of the Ising model
%for the case of free-particles, thereby providing an exponential improvement in this case.
%We believe that our approach, perhaps enhanced, may yield improvements for computing other matrix
%forms that involve high-degree polynomials.

\subsubsection{Reducing the mean value}
A natural next step for our approach is to attempt to further increase the value of $z$ for which we evaluate $g(z)$.
Approximating $g_A(z)$ for typical $A$ at $|z| = 1/\mu$ implies an approximation of the permanent for a random matrix with mean $\mu$ and variance $1$.

However, one can see from the upper-bound on the interpolation error above that 
in order to achieve sufficiently small error the number of derivatives we initially compute must scale doubly
exponentially in the ratio $\sigma/\Delta$, namely the ratio of the interpolation length $\sigma$, and the step size $\Delta$.
Since $\Delta \sim 1/N_\sigma$ where $N_\sigma = \Omega(\sigma^2)$ is the number of roots in the disk of radius $\sigma$, then 
$\Delta \sim 1/\sigma^2$ which implies that the number of required derivatives is exponential in $\poly(\sigma)$.

Thus to improve on our technique new ideas would be required which would make a more economic
use of the derivatives computed, and not ``discard'' at each step $i$ a fraction $1/\ln(s_i)$ of the $s_i$
derivatives computed at that step.
Another approach would be to tighten the bound on the number of roots inside the disk of a given radius
for the polynomial $g_A(z)$ or some other related polynomial.

\subsubsection{Anti-concentration of the permanent}

We point out that our algorithm, and also that of \cite{Bar16,BS17}, are not merely a sequence of computational steps,
but actually show that the permanent of typical complex Gaussian (or more generally, random) matrices $A\sim {\cal N}(\mu,1,\C)^{n\times n}$
are well approximated
by a low-degree polynomial in the entries of $A\sim {\cal N}(0,1,\C)^{n\times n}$ - in fact a polynomial of degree $\poly\log(n)$.
Such a statement strongly indicates that for this range the permanent of complex Gaussian matrices
is anti-concentrated, since it is well-known by a theorem of Carbery-Wright \cite{CW01} that any polynomial
of low-degree in standard i.i.d. complex Gaussian variables is anti-concentrated.

However, we have been unable to use this theorem directly to derive a formal anti-concentration statement.
We note that the anti-concentration of the Gaussian permanent is also a necessary conjecture
of the ${\rm BosonSampling}$ paradigm in \cite{AA13, AC} along with the conjectured $\# P$-hardness of this problem.
Hence, intriguingly it seems that the conjectures on the computability, and statistics (i.e. anti-concentration)
of the permanent of the complex Gaussian matrices are closely related: 
Our results suggest that
for mean values $1/\poly\log(n)$ the anti-concentration conjecture holds, 
but ironically this is because the second conjecture does not hold for this regime  - i.e.
the permanent is relatively easy to compute via a polynomial of low degree.

\section{Preliminaries}

\subsection{Notation}
$\C$ denotes the complex numbers, and $\R$ denotes the real numbers.
${\cal N}(\mu,\sigma,\C)$ is the complex Gaussian distribution
of mean $\mu$ and variance $\sigma^2$. 
${\rm Bern}(\mu)$ is the biased-Bernoulli random variable - it is $1$ w.p. $1/2$ and $- 1 + \mu$ w.p. $1/2$.
$\Per(X)$ is the permanent of matrix $X$.
$\ln(x)$ is the natural complex logarithm of $x$, 
defined up to an additive factor $2\pi i k$ for integer $k$.
${\cal B}_r(z)$ denotes the closed disk in the complex plane of radius $r$ around point $z$.
For computational problems $A,B$ we denote $A \preceq B$ if there exists a poly-time reduction from $A$ to $B$ -
i.e. $A$ is no harder than $B$.

\begin{definition}

\textbf{Random Matrix}

\noindent
An $n\times n$ matrix $A$ 
is called a random matrix and denoted by
$A\sim \mathcal {D} (\mu, \sigma^2)^{n\times n}$,
if each entry of $A$ is independently drawn from some complex valued distribution with mean $\mu$ and variance $\sigma^2$.
\end{definition}
The entries of $A$ in the above definition do not have to be identically distributed. 
We denote the distribution of complex Gaussian matrices with mean $\mu$ and variance $\sigma^2$ with $\mathcal{N} (\mu, \sigma^2, \C)^{n\times n}$.

\section{Permanent-interpolating-polynomial as a random polynomial}

As described in Section \ref{sec:barvinok} recent studies designed algorithms
for evaluating multi-variate functions like the permanent or Ising model by
considering a related univariate polynomial $g(z)$.
These schemes used this polynomial to interpolate the value
of $\ln(g(z))$ at some point of interest $z = z_0$ using knowledge of the derivatives
at $z=0$.

For example, in his work, Barvinok \cite{Bar16} used the polynomial $g(z) = \Per(J \cdot (1-z) + A \cdot z)$.
In a more recent work \cite{LSS17} the authors choose a different polynomial $g(z) = Z_{\beta}(z)$ - namely the Ising partition function.
In both of these works the authors characterized the location of the roots of $g(z)$ in order to establish
that $\ln(g(z))$ is analytical in the region of interest.
Indeed, in his work, Barvinok showed that $g(z)$ has no roots in the unit disk for any matrix $A$ that satisfies $\max_{ij} |A_{ij} - J| \leq 0.5$
Likewise, the polynomial in the work of \cite{LSS17} the polynomial $g(z)$ was shown to have no roots inside the unit disk using
the Lee-Yang theorem. 
%Here $\| X \|_\infty$ is the maximum entry of $X$ in absolute value.

In our work we consider the polynomial
\be
g_A(z):= \Per(J + z \cdot A),
\ee
and then analyze it as a random polynomial in order to gain insight into the distribution of its roots. 
The choice of this polynomial has a more natural interpretation in the context of random matrices:
for a zero mean random matrix $A$, the value of $g_A(z)/z^n$ for nonzero $z$ is
the value of the permanent of a matrix drawn from the ensemble $A$ when shifted with a mean of $1/z$ and variance $1$. Another reason why we choose this polynomial is that it is easier to bound its roots compared to $\Per((1-z) J + z \cdot A)$.

We begin with the following definition
\begin{definition}\label{def:avg}[Average sensitivity]
\noindent
Let $g_A(z)$ be a random polynomial where $A$ is a random matrix.
For any real $r>0$ the stability of $g_A(z)$ at point $r$ is defined as
\be
\kappa(r) \equiv \kappa_g (r):= \E_\theta \E_A \left[ \frac{|g_A(re^{i\theta})|^2}{|g_A(0)|^2} \right],
\ee
where $\E_\theta [\cdot] = \int_{\theta =0}^{2\pi} [\cdot]\frac{d \theta}{2\pi}$ is the expectation over $\theta$  from  a uniform distribution over $ [0,2\pi)$.
\end{definition}

We begin with an upper-bound on the average sensitivity of the permanent of a random matrix
that can also be derived from the work of Rempa{\l}a and Weso{\l}owski \cite{RW04} (see Proposition 1). 

%Our proof is arguably simpler and may help the reader gain insight into the behavior of this quantity.

\begin{lemma}\label{lem:moment2}

\noindent
Let $A \sim \mathcal{D}^{n\times n} (0,1)$.
% be some random matrix, where each entry has mean $\sigma$.
Then
\be
\kappa_g (r) \leq e^{r^2}.
\ee
%The same holds if we had considered the polynomial $g_A(z):= \Per(J + z e^{i \phi} \cdot A)$ instead for some arbitrary $\phi\in [0,2\pi)$.
%\lior{last statement not used}
\label{lemma:moment}
\end{lemma}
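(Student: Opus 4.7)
The plan is to expand $g_A(z) = \Per(J + zA)$ as a polynomial in $z$, compute the second moment of each of its coefficients, then combine with Parseval's identity on the circle $|z|=r$ to get a clean closed form.

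First I would write
\bas
g_A(z) = \sum_{\sigma \in S_n} \prod_{i=1}^n (1 + z\, A_{i,\sigma(i)}) = \sum_{k=0}^n c_k z^k,
\eas
where, after expanding the product over $i$ and summing over $\sigma$ (each injection $S \to [n]$ extends to $(n-k)!$ permutations of $[n]$), one identifies
\bas
c_k = (n-k)! \sum_{|R|=k,\,|C|=k} \Per(A_{R,C}),
\eas
with $A_{R,C}$ the $k\times k$ submatrix of $A$ on rows $R$ and columns $C$. In particular $c_0 = \Per(J) = n!$, so the denominator in Definition \ref{def:avg} is the deterministic constant $(n!)^2$.

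Next, using Parseval's identity for the Laurent series of $g_A$ on the circle $|z|=r$ I would write
\bas
\E_\theta\!\left[|g_A(re^{i\theta})|^2\right] = \sum_{k=0}^n |c_k|^2 r^{2k}.
\eas
After taking $\E_A$ and applying Fubini, the whole problem reduces to computing $\E_A[|c_k|^2]$. Expanding the square as a sum over pairs $(R,C,\sigma)$ and $(R',C',\sigma')$ with $\sigma:R\to C$ and $\sigma':R'\to C'$ bijections, only those pairs survive the expectation where every entry of $A$ that appears is paired with its own conjugate. Since the entries are independent with zero mean and unit variance, this forces $R=R'$, $C=C'$, and $\sigma=\sigma'$. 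Counting yields
\bas
\E_A[|c_k|^2] = (n-k)!^2 \cdot \binom{n}{k}^2 \cdot k! = \frac{(n!)^2}{k!}.
\eas

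Finally I would sum the resulting series:
\bas
\kappa_g(r) = \frac{1}{(n!)^2} \sum_{k=0}^n \frac{(n!)^2}{k!}\, r^{2k} \le \sum_{k=0}^\infty \frac{r^{2k}}{k!} = e^{r^2},
\eas
which is the claimed bound.

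The only mildly delicate step is the moment computation for $c_k$: one has to be careful that the pairing argument really does force $R=R'$, $C=C'$ and $\sigma = \sigma'$ and that no extra combinatorial factors sneak in (for instance from different $\sigma$'s restricting to the same injection on $R$). Once the combinatorics is set up correctly, the rest is just Parseval plus the Taylor expansion of $e^{r^2}$, so the proof is short.
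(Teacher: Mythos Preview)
Your argument is correct; the paper does not spell out a proof here (it defers to \cite{EM17} and \cite{RW04}), but the computation it reuses in the proof of Lemma~\ref{lem:sqrtn} is exactly yours: expand $\Per(J+zA)=\sum_k c_k z^k$ with $c_k=(n-k)!\sum_{|R|=|C|=k}\Per(A_{R,C})$, use orthogonality of permanents of distinct submatrices to get $\E_A|c_k|^2=(n!)^2/k!$, and sum to $\sum_{k\le n} r^{2k}/k!\le e^{r^2}$. The only cosmetic difference is that the paper leans on $\E_A[c_k\overline{c_l}]=0$ for $k\neq l$ directly (so the $\theta$-average is redundant), whereas you apply Parseval in $\theta$ first and then take $\E_A$---the two orders give the same one-line identity.
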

%\lior{removed proof to shorten paper}
A somewhat simpler and more intuitive proof of this lemma is given in \cite{EM17} (see Lemma 12).

Our next result is to relate the number of roots of $g_A$ inside a disk of certain radius around origin to its average sensitivity around the boundary of that disk. Jensen's formula provides this connection.
\begin{proposition}\label{prop:root}
Let
\be
g_A(z) = \Per(J + z \cdot A),
\ee
and $A \sim \mathcal {D}^{n\times n} (0,1)$. Then if $N_r$ is the number of roots of $g_A$ inside a disk of radius $r$, $\E_A [N_r] \leq 4 r^2$. In particular:

\begin{enumerate}
\item\label{it:large} 
Let $r > 1$. With probability at least $1- \frac{1}{r}$ the polynomial $g_A(z)$ has at most $4 r^3$ roots
inside the disk with radius $r$ around $z=0$. 

\item\label{it:small} 
Let $r < 1/2$. With probability at least $1-  4 r^2$ there are no roots inside the disk with radius $r$ around $z=0$.

%The same holds if we had considered the polynomial $g_A(z):= \Per(J + z e^{i \phi} \cdot A)$ instead for some arbitrary phase $e^{i \phi}$.
%\lior{last sentence not used - please remove}
\end{enumerate}
\label{proposition:roots}
\end{proposition}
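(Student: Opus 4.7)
The plan is to apply Jensen's formula to the random polynomial $g_A(z)$ and combine it with the second moment bound from \lemref{moment2} (i.e.\ Lemma \ref{lemma:moment}). First I would note that $g_A(0) = \Per(J) = n!$ is a deterministic nonzero constant, so Jensen's formula is applicable almost surely. Applied at radius $R$, it gives
\ba
\sum_{|z_j|\leq R} \ln\frac{R}{|z_j|} \;=\; \int_0^{2\pi} \ln|g_A(Re^{i\theta})|\,\frac{d\theta}{2\pi} \;-\; \ln(n!).
\ea
To obtain an upper bound on $N_r$ I would use the standard monotonicity trick: for any $R>r$, restricting the sum on the left to zeros inside the smaller disk of radius $r$ yields $N_r \ln(R/r) \leq \sum_{|z_j|\leq R} \ln(R/|z_j|)$.

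Next I would take the expectation over $A$ and push it through. By Fubini the expectation exchanges with the angular integral. Because $\ln$ is concave on positive reals, Jensen's inequality gives, for each $\theta$,
\ba
\E_A \ln|g_A(Re^{i\theta})| \;=\; \tfrac12\,\E_A \ln|g_A(Re^{i\theta})|^2 \;\leq\; \tfrac12\,\ln \E_A |g_A(Re^{i\theta})|^2,
\ea
and a second application of Jensen pulls $\tfrac12\ln$ outside the angular integral, producing $\tfrac12 \ln\!\left( \kappa_g(R)\cdot (n!)^2\right)$. Combining with \lemref{moment2} which gives $\kappa_g(R)\leq e^{R^2}$, the $\ln(n!)$ terms cancel and I obtain
\ba
\E_A[N_r]\,\ln(R/r) \;\leq\; \tfrac12 \ln \kappa_g(R) \;\leq\; R^2/2.
\ea
Choosing $R = er$ so that $\ln(R/r)=1$ yields $\E_A[N_r] \leq e^2 r^2/2 \leq 4r^2$, the claimed expectation bound.

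The two ``in particular'' statements then follow by a one-line Markov inequality. For part \ref{it:large} (with $r>1$), $\Pr[N_r \geq 4r^3] \leq \E_A[N_r]/(4r^3) \leq 1/r$. For part \ref{it:small} (with $r<1/2$), $N_r$ is a nonnegative integer, so $\Pr[N_r\geq 1] \leq \E_A[N_r] \leq 4r^2$, giving a root-free disk of radius $r$ with probability at least $1-4r^2$.

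The only technical subtlety I anticipate is the validity of the two Jensen steps: one must verify that $\E_A \ln|g_A(Re^{i\theta})|$ is finite (so that Fubini and the inequalities are not formally vacuous), which follows from the fact that zeros of $g_A$ at a prescribed point $Re^{i\theta}$ form a measure-zero event in $A$ together with the polynomial second-moment bound preventing divergences. Apart from that, the argument is a clean combination of Jensen's formula, Jensen's inequality, the moment estimate of \lemref{moment2}, and Markov.
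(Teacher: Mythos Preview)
Your proposal is correct and follows essentially the same route as the paper: Jensen's formula combined with concavity of $\ln$, the second-moment bound of Lemma~\ref{lemma:moment}, and Markov's inequality. The only cosmetic difference is in the two-radii step: the paper applies Jensen's formula at radius $r$ and lower-bounds the zero sum by $\delta\, N_{r(1-\delta)}$ (then takes $\delta=1/2$), whereas you apply it at radius $R>r$ and lower-bound by $N_r\ln(R/r)$ (then take $R=er$); both choices produce the constant $4$.
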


The above claim immediately implies:
\begin{corollary}
Let $\eps$ be real number $0 < \eps< 0.5$. For at least $1- 3 \eps$ fraction of matrices $A \sim \mathcal {D}^{n\times n} (0,1)$ $g_A$ has no roots inside $\mathcal {B}_\eps (0)$ and has at most $32/\eps^3$ roots inside $\mathcal{B}_{2/\eps}(0)$.

 \label{cor: safedisks} 
\end{corollary}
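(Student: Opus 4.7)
The plan is to derive the corollary as a straightforward two-event union bound over the two conclusions of Proposition \ref{prop:root}. Since $0<\epsilon<0.5$, we can apply item \ref{it:small} of Proposition \ref{prop:root} at radius $r=\epsilon<1/2$, and apply item \ref{it:large} at radius $r=2/\epsilon>4>1$. The former gives a bad-event probability of at most $4\epsilon^2$ for having any root inside $\mathcal{B}_\epsilon(0)$, and the latter gives a bad-event probability of at most $1/(2/\epsilon)=\epsilon/2$ for having more than $4(2/\epsilon)^3=32/\epsilon^3$ roots inside $\mathcal{B}_{2/\epsilon}(0)$.

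By a union bound, the probability that at least one of the two bad events happens is bounded by $4\epsilon^2+\epsilon/2$. Since $\epsilon<1/2$, we have $4\epsilon^2<2\epsilon$, so the total failure probability is strictly less than $2\epsilon+\epsilon/2=5\epsilon/2<3\epsilon$. Therefore, with probability at least $1-3\epsilon$ over $A\sim\mathcal{D}^{n\times n}(0,1)$, the polynomial $g_A$ simultaneously has no roots in $\mathcal{B}_\epsilon(0)$ and at most $32/\epsilon^3$ roots in $\mathcal{B}_{2/\epsilon}(0)$, which is precisely the corollary's statement.

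There is essentially no obstacle here: the only thing to check is that the parameter regimes required by the two parts of Proposition \ref{prop:root} are actually satisfied, namely $\epsilon<1/2$ for the inner-disk claim and $2/\epsilon>1$ for the outer-disk claim, both of which follow immediately from the hypothesis $0<\epsilon<0.5$. The constant $3$ on the right-hand side is loose and chosen only for a clean statement; the computation $4\epsilon^2+\epsilon/2<3\epsilon$ for $\epsilon\in(0,1/2)$ is what fixes it.
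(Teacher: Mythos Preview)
Your proof is correct and follows essentially the same approach as the paper: a union bound over the two items of Proposition~\ref{prop:root}, applied at radii $r=\eps$ and $r=2/\eps$, yielding failure probability at most $4\eps^2+\eps/2<3\eps$ for $\eps<1/2$. The paper's proof is the same one-line argument; your additional verification that the radii fall in the required ranges ($\eps<1/2$ and $2/\eps>1$) is a welcome bit of care.
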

\begin{proof} The proof follows by a union bound on the two items in Proposition \ref{proposition:roots}. In particular, the probability that $\mathcal {B}_\eps (0)$ is not root free is at most $4 \eps^2$, and the probability that $\mathcal{B}_{2/\eps}(0)$ has more than $32/\eps^3$ roots is at most $\eps/2$, so using union bound for $\eps < 0.5$ the probability of error amounts to $\eps/2 + 4 \eps^2 < 3 \eps$.
\end{proof}

\begin{proof} (of Proposition \ref{proposition:roots})
We use Jensen's formula. Let $g: \C \rightarrow \C$ be an arbitrary polynomial
 such that $g_A(0) \neq 0$. Let $N_r$ be the number of zeros of $g$ inside a disk of radius $r$, and let $z_1, \ldots, z_{N_r} \in \C$ be these zeros. Then Jensen's formula is
\be
\sum_{|z_j| \leq r} \ln  \frac{r}{|z_j|} + \ln |g_A(0)| = \E_\theta \ln |g_A(r e^{i\theta})|,
\ee
where we have used the notation
\be
\int_{\theta=0}^{2\pi} [\cdot] \frac{d\theta}{2\pi} =: \E_\theta [ \cdot ].
\ee
Let $0 < \delta < 1$ be a real number. We first use the following bound
\bea
\sum_{|z_j| \leq r} \ln  \frac{r}{|z_j|} &\geq& \sum_{|z_j| \leq r (1-\delta)} \ln  \frac{r}{|z_j|}\\
&\geq& \sum_{|z_j| \leq r (1-\delta)} \ln  \frac{r}{r(1-\delta)}\\
&\geq& \delta \cdot N_{r (1-\delta)}.
\eea
We now pick $A \sim {\cal D} (0,1,\C)^{n\times n}$ and view the variables in the Jensen's formula above as random variables depending on $A$.
 By Jensen's formula
\ba\label{eq:Nr}
\delta\cdot N_{r (1-\delta)}
&\leq
\E_\theta \left[\ln(|g_A(r e^{i\theta})|)   \right] -  \ln(|g_A(0)|),\\
&= \E_\theta \left[\ln \Big(\Big|\frac{g_A(r e^{i\theta})}{n!}\Big| \Big)   \right],\\
&=\frac 1 2 \E_\theta \left[\ln \Big(\Big|\frac{g_A(r e^{i\theta})}{n!}\Big|^2 \Big)   \right],\\
&= \frac 1 2 \cdot \E_\theta [\ln b(r,\theta)],
\label{eq:44}
\ea
where 
$b(r,\theta):= \Big|\frac{g_A(r e^{i\theta})}{n!}\Big|^2$. 
Then by Lemma \ref{lem:moment2}
\be
\forall \theta \quad
\E_{A} \left[ b(r,\theta) \right] \leq e^{r^2},
\ee
and so in particular
\be
\E_{\theta,A} \left[ b(r,\theta) \right] \leq e^{r^2}.
\ee
So by concavity of the logarithm function
\ba
\delta\cdot \E_A N_{r (1-\delta)} &\leq \frac 1 2 \E_{\theta, A} \left[ \ln\Big(b(r,\theta)\Big)\right],\\
&\leq \frac 1 2  \ln \Big(\E_{\theta, A}\left[b(r,\theta)\right]\Big),\\
&\leq r^2/2.
\ea
As a result, choosing $\delta = 1/2$ and doing the change of variable $r \leftarrow r/2$,
\be
\E_A N_{r} \leq 4 r^2.
\label{eq:expected}
\ee
Thus by Markov's inequality, when $r>1$:
\be
\Pr_A (N_r \geq 4 r^3) \leq \frac 1  r.
\ee
%\lior{Missing factor of 0.5 in the parentheses above}
%So by Equation \ref{eq:Nr}
%\be
%\Pr_A \Big( \delta\cdot N_{r (1-\delta)} \geq r^3/2 \Big) \leq \frac1r.
%\ee
%Choosing $\delta = 1/2$ and rescaling $r \leftarrow r/2$ we have that
%\be
%\Pr_A \Big( N_{r} \geq 8 r^3 \Big) \leq \frac{1}{2r}.
%\ee
%for any $s>0$
%\bea
%\Pr_A\Big(b (r,\theta)\geq 2 \cdot \ln (e^{r^2/2})\Big) &=& \Pr_A\Big(e^{s b (r,\theta)} \geq e^{2s \cdot \ln (e^{r^2/2})}\Big),\\
%&\leq& \frac{\E_A e^{s b (r,\theta)} }{e^{2s \cdot \ln (e^{r^2/2})}},\\
%&\leq& \frac{\E_A \Big|\frac{g_A(r e^{i\theta})}{n!}\Big|^s }{e^{2s \cdot \ln (e^{r^2/2})}}.\\
%\eea
%where the last equation follows by Markov. Choosing $s=2$
%  and using lemma \ref{lemma:moment} we have for any $\theta$:
%\bea
%\Pr_A\Big( \ln \Big(\Big|\frac{g_A \left(r e^{i\theta}\right)}{n!}\Big| \Big) \geq r^2\Big) &\leq& \frac{\E_A \Big|\frac{g_A(r e^{i\theta})}{n!}\Big|^2 }{e^{4 \cdot \ln (e^{r^2/2})}},\\
%&\leq& \frac{e^{r^2}}{e^{2r^2}} \leq e^{-r^2}.
%\eea
%Therefore 
%\be
%\Pr_A \left(
%N_{r (1-\delta)} \geq r^2/\delta\right) \leq e^{-r^2},
%\ee
%or optimizing over $\delta$
%\be
%\Pr_A \left(
%N_{r} \geq 7 r^2\right) \leq e^{-3 r^2}.
%\ee
Next we consider the case when $r < 1$. In this case we can directly use the following Markov's inequality using equation \ref{eq:expected}:
\be
\Pr(N_r \geq 1) \leq 4 r^2.
\ee
Note this bound is useful only when $r < 1/2$.
%
%
%
%compute the expected value of $N_r$. Using concavity of log function in equation \ref{eq:44}
%and using lemma \ref{lemma:moment}
%\be
%\delta \cdot \E_A \left[ N_{r (1-\delta)}\right]  \leq \frac{1}{2} \E_\theta \left[\ln\hspace{1mm}\E_A \Big|\frac{g_A(r e^{i\theta})}{n!}\Big|^2  \right] \leq r^2/2.
%\ee
%choosing $\delta = 1/2$ we have $\E_A \left[ N_{r}\right] \leq 4 r^2$ and using Markov inequality we have that for any $r< 0.5$
%\be \Pr_A ( N_{r}\geq 1) \leq \E_A \left[ N_{r}\right] \leq  4r^2. \ee
\end{proof}

\subsection{Root-avoiding curves}

We now use the above insight on the distribution of roots of the random polynomial $g_A(z)$
to edge closer to an algorithm.
We define:
\begin{definition}\label{def:free}

\textbf{Root-free area}

\noindent
A subset $S \subseteq \C$ is root-free w.r.t. polynomial $g(z)$
if it contains no roots of $g$.
\label{definition:rootfreedisk}
\end{definition}

\noindent
\begin{definition}

\textbf{Tube of width $w$ around a curve}

\noindent
Let $w > 0$ be a real number, and let $\gamma: [0,1] \to \C$ denote some parameterized curve in the complex plane.
The tube of width $w$ around $\gamma$
denoted by ${\cal T} (\gamma, w)$,
is the set of points defined as $\cup_{x \in \gamma} \mathcal{B}_w(x)$, where $\mathcal{B}_w (x)$ is the closed $w$-ball centered around $x\in \C$. In other words, for each point on the curve we include the $w$-ball around it in the tube.
\label{definition:fattening}
\end{definition}

We will denote by $L(a,b)$ the linear segment in $\C$ between $a,b\in \C$.

\begin{lemma}\label{lem:root2}

\textbf{Finding a root avoiding tube}

\noindent
Let $0 < \eps < 0.1$ and $A\sim {\cal D}(0,1)^{n\times n}$.
Fix $w = \pi\eps^6$.
For numbers $a\in \C,b\in \R$ consider the following curve
\be
\gamma_{a,b}(t) =
  \begin{cases}
    at  & t \in [0,\frac 1 2) \\
    a (1 - t) + 2 b (t-\frac 1 2) &  t \in [\frac 1 2, 1]
  \end{cases}
\ee
There exists $a\in \C$, $|a| \leq 2 \eps$ 
and $b\in \R$, $b\in [1/\eps,1/\eps + 2 \eps]$ 
such that the tube
${\cal T}(\gamma_{a,b},w)$ is root-free w.r.t. $g_A(z)$
for a fraction
at least $1 - 4 \eps$ over choices
of $A$. The total length of such a tube is at most $2/\eps$.
\end{lemma}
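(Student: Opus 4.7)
The plan is to prove this via a probabilistic (averaging) argument over the 3-dimensional parameter set $\{(a,b) \in \C \times \R : |a| \leq 2\epsilon,\ b \in [1/\epsilon, 1/\epsilon + 2\epsilon]\}$, which has total measure $8\pi\epsilon^3$. First I would invoke Corollary \ref{cor: safedisks} to call an $A$ \emph{good} if $g_A$ has no roots in $\mathcal{B}_\epsilon(0)$ and at most $N := 32/\epsilon^3$ roots in $\mathcal{B}_{2/\epsilon}(0)$; such $A$ occur with probability at least $1 - 3\epsilon$, which directly absorbs $3\epsilon$ of the allowed $4\epsilon$ failure budget. The goal is then to show $\E_A\!\left[\Pr_{(a,b)}[\mathcal{T}(\gamma_{a,b}, w) \text{ meets a root of } g_A] \mid A \text{ good}\right] \leq \epsilon$; by Fubini this yields a specific $(a,b)$ with $\Pr_A[\text{tube fails}] \leq 4\epsilon$.

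For good $A$, roots outside $\mathcal{B}_{2/\epsilon}(0)$ are at distance $\gg w$ from the curve (since $\gamma_{a,b} \subset \mathcal{B}_{1/\epsilon + 2\epsilon}(0)$), and the first segment $L(0, a/2)$ sits inside $\mathcal{B}_{\epsilon+w}(0)$, which one may declare root-free at the cost of an extra event of probability $\leq 4(\epsilon+w)^2 \ll \epsilon$ via Proposition \ref{prop:root}(\ref{it:small}). So it remains only to bound the probability that the long second segment $L(a/2, b)$ passes within $w$ of one of the at most $N$ roots $z_j = x_j + i y_j \in \mathcal{B}_{2/\epsilon}(0)$. For each such root, writing the perpendicular distance from $z_j$ to the line through $a/2$ and $b$ and expanding to first order in $|a|/b$ yields the condition
\[
\bigl|y - a_2(b-x)/(2b) - a_1 y/(2b)\bigr| \leq w,
\]
which confines $a_2$ to a strip of width $O(bw/|b-x|)$ in $\R$, and also requires $|y| \leq \epsilon|b-x|/b + O(w)$ for that strip to meet the disk $\{|a| \leq 2\epsilon\}$.

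Summing the resulting per-root probability over the $N$ roots and averaging over $b$, I expect to get a bound of order $\epsilon$, thanks to the calibrated choice $w = \pi \epsilon^6$. The hardest part will be handling roots very close to the real axis (small $|y_j|$), where the $b$-integral $\int db/|b - x|$ is logarithmically divergent; this is controlled by combining it with the feasibility constraint $|b - x| \geq |y|\,b/\epsilon$, which effectively truncates the singularity at scale $|y|/\epsilon^2$. If this per-root union bound turns out too coarse, an alternative route is to replace it by the expected-count bound $\E_A[\#\{\text{roots in tube}\}] = \int \rho(z) \Pr_{(a,b)}[z \in \mathcal{T}]\, d^2 z$ (where $\rho$ is the expected root density), and exploit the small average tube area $\E_{(a,b)}[\text{Area}(\mathcal{T})] = O(w/\epsilon) = O(\epsilon^5)$ against the global root-count bounds from Proposition \ref{prop:root}. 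Finally, the length bound $|a/2| + |b - a/2| \leq 2/\epsilon$ is immediate from $|a| \leq 2\epsilon$ and $b \leq 1/\epsilon + 2\epsilon$ by the triangle inequality.
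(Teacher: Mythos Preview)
Your high-level strategy matches the paper's exactly: condition on the ``good'' event from Corollary~\ref{cor: safedisks} (costing $3\eps$), bound the expected failure over the parameter choice by $\eps$, swap expectations, and extract a deterministic parameter by the probabilistic method. The divergence is entirely in how you propose to bound the inner quantity $\E_{(a,b)}[\text{tube hits a root}]$ for a fixed good $A$.

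The paper sidesteps every geometric computation you set up. Instead of averaging over the full $3$-dimensional box $\{|a|\le 2\eps\}\times[1/\eps,1/\eps+2\eps]$, it selects a \emph{discrete} one-parameter family of $\eps M$ curves (with $M = 32/\eps^5$) whose tubes of width $w=\pi\eps^6$ are pairwise \emph{disjoint} outside $\mathcal{B}_\eps(0)$ and contained in $\mathcal{B}_{2/\eps}(0)$. Disjointness immediately gives $\sum_j m_{j,A}\le N_{2/\eps}\le 32/\eps^3$, so the average tube contains at most $(32/\eps^3)/(\eps M)=\eps$ roots; this is the whole inner bound, no per-root distance formulas, no $b$-integration, no strip widths, and no singularity near the real axis to truncate. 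The choice $w=\pi\eps^6$ is calibrated precisely so that $\eps M$ tubes of width $w$ fit disjointly, not (as you guessed) to make a per-root union bound come out to $\eps$.

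Your route is not wrong in principle, but it is doing substantially more work for the same conclusion, and the complications you flag (the $\int db/|b-x|$ divergence for roots near the real endpoint) are real and would require care to close. If you want to keep a continuous average, the cleanest fix is to imitate the paper's idea in continuous form: restrict to a one-parameter family $a = a(\theta)$ along which the tubes are disjoint, so that the ``total area swept'' argument replaces the per-root geometry. Your fallback proposal (expected root count via area) is morally this, but you would still need the disjointness to turn ``small total area'' into ``few roots per tube'' without appealing to a pointwise root-density bound, which the paper never establishes.
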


%\lior{added}
%We note the subtle differences between the items of this lemma:
%the first item implies that given as input a magnitude value $\eps$ we can sample
%linear segments of magnitude $1/\eps$ that are at distance at least $w$ from any root -  but alas, we 
%have no control over the complex
%phase of these linear segments - i.e. they can be multiplied by $e^{i\phi}$ for some
%random $\phi$.
%The second item removes this technical obstruction for the case of the Gaussian random matrix:
%for such matrices rotational invariance means that the real-line linear segment $[0,1/\eps]$ is
%at distance at least $\eps$ from any root from most matrices $A$.
%Finally, the third item removes the same obstruction for any random matrix by using 
%the more elaborate two-segment curve:  this will be useful in showing that the Bernoulli
%random matrix that is slightly biased by a {\it real } number also admits efficient approximation
%of the permanent.
%
%\lior{perhaps we want to keep only item (3) and then show the results about Gaussian / Bernoulli as a special case?} \saeed{actually item 3 and item 2 are different! in item 2 we fix eps in advance and the linear segment of certain width and length is root free, but in item three the lenght of the bilinear curve is not fixed in advance, it is defined right after we draw the zero mean matrix}

\begin{proof} 
Fix $M = 32/\eps^5$. 
For each $j\in [M]$ define the piece-wise linear curve $\gamma_j$ 
made of 2 segments
\bea
L(0, 2\eps + i \cdot 2\eps \tan(2\pi j / M)),
\eea
and,
\bea
L( 2\eps + i \cdot 2\eps \tan(2\pi j /M), 1/\eps\cdot \tan(2\pi j / M)).&&
\eea
See figure \ref{fig:tubes}.
\begin{figure}
\center{
 \epsfxsize=3in
 \epsfbox{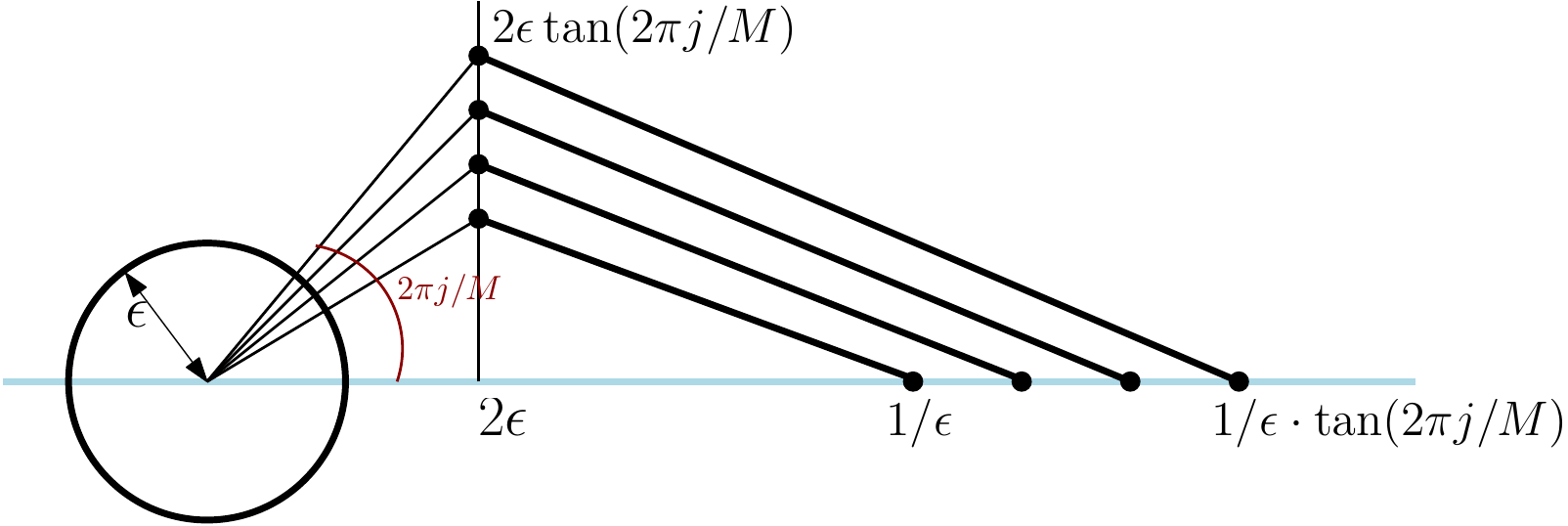}
 \caption{\footnotesize{
 The family of 2-piecewise linear curves interpolating from $z=0$ to some point $z$ on the real line.
 The curves branch out from the unit circle at an angle between $\pi/4$ and $\pi/4 + \eps$.
 Each curve starts out at an angle of $2\pi j/M$ for some integer $j$ and hits the imaginary axis at $\Re (z) = 2\eps$.
 Hence the imaginary magnitude of the end-point of the first segment in the $j$-th curve is $2 \eps \tan(2\pi j/M)$.
 After hitting the imaginary axis at $\Re{(z)} = 2\eps$, they descend back to the real line in {\it parallel},
 thus hitting the real line at different points.
 The bottom-most curve hits the real line at $z = 1/\eps$.
 Note that by this definition tubes of weight $w$ around each curve do not overlap outside the ball of radius $\eps$
 and in particular when they hit the imaginary axis at $\Re(z) = 2 \eps$.
}\label{fig:tubes}}}
\end{figure}
Note that specifically for a small subset of indices $j\in [M/8,\hdots, M/8 +  \eps M]$ the value of $b$ is locked in a tight range:
\be\label{eq:ab}
|a| \leq 2 \eps, \quad b\in [1/\eps, 1/\eps + 2 \eps]
\ee

Let ${\cal T}_j  = {\cal T}(\gamma_j,w)$.
Then by Equation \ref{eq:ab} the union of these tubes is contained inside a ball that is not too large:
\be\label{eq:disjoint}
\bigcup_j {\cal T}_j \subset \mathcal{B}_{2/\eps} (0).
\ee
and that by our definition of $w$ these tubes are disjoint outside a ball of radius $\eps$:
\be\label{eq:disjoint2}
\forall j\neq k \quad {\cal T}_j \cap {\cal T}_k - {\cal B}_{\eps}(0) = \emptyset
\ee
Let $E$ the event of no roots at distance at most $\eps$ to the origin and few roots inside the disk of radius $2/\eps$:
\be
E = \{ A:  
N_{\eps}=0 \quad \wedge  \quad N_{2/ \eps} \leq 32/\eps^3\}
\ee
By Corollary \ref{cor: safedisks}
\be\label{eq:E}
\Pr(E) \geq 1 - 3 \eps.
\ee
Condition on $E$ and for each $j\in [M/8,\hdots, M/8 + \eps M]$ and matrix $A$ let $m_{j,A}$
denote the number of roots of $g_A(z)$ inside ${\cal T}_j - \{0\}$.
Then by Equation \ref{eq:disjoint2} and the definition of $E$ we have
\be
\forall A \in E \quad
\sum_{j = M/8}^ {M/8 + \eps M} m_{j,A} \leq N_{2/\eps}
\ee
and so for uniform random $j$ the average number of roots is small
\be
\forall A \in E \quad
\mathbf{E}_j \left[ m_{j,A} \right] \leq N_{2/\eps}/(\eps M) \leq \eps
\ee
and so in particular this holds for average matrix $A$ (conditioned on $E$):
\be
\mathbf{E}_{A|E} \mathbf{E}_{j}  \left[ m_{j,A} \right] \leq \eps
\ee
so by linearity of expectation
\be
\mathbf{E}_j \mathbf{E}_{A|E}   \left[ m_{j,A} \right] \leq \eps
\ee
which implies that there exists $j_0\in \{M/8,\ldots,M/8 + M\eps\}$ (an index that minimizes $\mathbf{E}_{A|E}   \left[ m_{j,A} \right]$) such that
\be
\mathbf{E}_{A|E}   \left[ m_{j_0,A} \right] \leq  \eps
\ee
hence for $j_0$ we have
\be
\Pr_{A|E}( m_{j_0,A} >0 ) \leq  \eps
\ee
and so by the union bound with the probability of $E$ from Equation \ref{eq:E}
\be
\Pr_{A}( m_{j_0,A} >0 ) \leq \eps + 3 \eps = 4 \eps.
\ee
Note that the total length of this tube is at most $2/\eps$.

\end{proof}

%\subsection{Gaussian measure}
%
%Of particular interest in this paper is the Gaussian measure.
%The rotation-invariance of the Gaussian measure implies:
%\begin{corollary}\label{cor:gaus}
%Let $A\sim N(0,1)^{n\times n}$, $w < \sqrt{7/2}$ and real number $r>0$.
%Let $\left\{ {\cal T}_i \right\}_{i\in [M]}$ denote a set of $M = r^2$ tubes where
%$$
%{\cal T}_k:= {\cal T}(0,r e^{2\pi i k/M},w)
%$$
%Then ${\cal T}(0,r,w)$ is root-free for a measure at least $1 - r^2/e^{r^2} - (7/2) w^2$ of matrices $A$.
%\end{corollary}

%\newpage
\section{Computational analytic continuation}

%\subsection{Technical Tools}

In this section we pose the question of analytic continuation as a computational task and devise a method
to derive the value of the function $g(z)$ at some point $z = r$ using it's derivatives at $z=0$,
assuming that $g$ has a root-free tube around a curve $\gamma$, $\gamma(0)=0, \gamma(1) = z$.
We require the following result of Barvinok:
\begin{lemma}\label{lem:bar}\cite{Bar16}

\noindent
\textbf{Efficient computation of derivatives}

\noindent
\begin{enumerate}
\item 
Let $A$ be an $n \times n$ complex matrix and let $g_A(z):= \Per(J + z A)$, where $J$ is all ones matrix. 
There exists a deterministic algorithm that runs in time $n^{O(l)}$ and computes the $l$-th derivative of $g_A(z)$ at point $z = 0$.
\item
Let $g(z)$ be a polynomial of degree $n$.
Given an algorithm 
that runs in time $t$ and computes the first $\ell$ derivatives of $g(z)$ at point $z=0$,
one can compute in time $O(\ell^2 t)$ the first $\ell$ derivatives of $f(z) = \ln(g(z))$ at $z=0$.
\end{enumerate}

\end{lemma}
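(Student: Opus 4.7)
For part 1 of the lemma, the plan is to observe that $g_A(z) = \Per(J+zA)$ is a polynomial of degree at most $n$ in $z$, so $g_A^{(l)}(0) = l! \cdot [z^l] g_A(z)$, and to compute this coefficient explicitly. Expanding the permanent as
\bas
\Per(J+zA) = \sum_{\sigma \in S_n} \prod_{i=1}^n \bigl(1 + z A_{i,\sigma(i)}\bigr)
\eas
and distributing the product row by row, each degree-$l$ monomial in $z$ arises from a choice of a subset $S\subseteq [n]$ of $l$ rows (where we take the $zA_{i,\sigma(i)}$ factor), together with its image $T = \sigma(S)$ and a bijection $S\to T$, while the remaining $n-l$ rows are mapped bijectively to $[n]\setminus T$, contributing $(n-l)!$ completions. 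Collecting terms this yields
\bas
g_A^{(l)}(0) = l!\,(n-l)! \sum_{\substack{S,T\subseteq[n]\\ |S|=|T|=l}} \Per\bigl(A[S,T]\bigr),
\eas
where $A[S,T]$ is the $l\times l$ submatrix of $A$ on rows $S$ and columns $T$. I would then compute each sub-permanent in time $2^{O(l)}$ via Ryser's formula and sum over the $\binom{n}{l}^2 = n^{O(l)}$ pairs $(S,T)$, giving total runtime $n^{O(l)}$.

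For part 2, the plan is to convert the identity $f = \ln g$ into a triangular recursion for the derivatives at the origin. Starting from $g' = f'\, g$ and applying the Leibniz rule $k-1$ further times produces
\bas
g^{(k)}(z) = \sum_{j=0}^{k-1} \binom{k-1}{j} f^{(k-j)}(z)\, g^{(j)}(z).
\eas
Evaluating at $z=0$, and using that $g(0) = \Per(J) = n! \neq 0$ so division by $g(0)$ is legal, one solves for $f^{(k)}(0)$ in terms of $g(0),\ldots, g^{(k)}(0)$ and the previously computed $f^{(1)}(0),\ldots, f^{(k-1)}(0)$. Given the first $\ell$ derivatives of $g$ at $0$ (computed in time $t$ by hypothesis), each step of the recursion costs $O(k)$ arithmetic operations and produces one new derivative of $f$, so all $\ell$ derivatives of $f$ are obtained in additional time $O(\ell^2)$, for a total runtime of $O(t + \ell^2) \leq O(\ell^2 t)$ as stated.

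The main obstacle is essentially expository rather than analytical: part 1 is a combinatorial bookkeeping exercise and part 2 is a standard algebraic manipulation. The only things to verify carefully are that the expansion of $\Per(J+zA)$ genuinely decouples into $\binom{n}{l}^2$ sub-permanents weighted by the correct $(n-l)!$ multiplicity (with no double counting arising from the joint sum over $\sigma$ and $S$), and that the recursion of part 2 is well-posed because the denominator $g_A(0) = n!$ never vanishes. Neither step requires any complex-analytic input beyond Ryser's formula and iteration of the product rule.
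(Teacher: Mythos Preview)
Your proof is correct. Note, however, that the paper does not actually prove this lemma: it is stated with the citation \cite{Bar16} and used as a black box from Barvinok's work, so there is no ``paper's own proof'' to compare against. Your argument is the standard one (and essentially what appears in Barvinok): for part~1 the expansion of $\Per(J+zA)$ into sub-permanents with the $(n-l)!$ completion count is exactly right, and for part~2 the Leibniz recursion from $g' = f'g$ is the usual trick. One minor remark: part~2 of the lemma is stated for an arbitrary degree-$n$ polynomial $g$, not specifically for $g_A$, so the justification ``$g(0)=\Per(J)=n!$'' is particular to the application rather than the general statement; the general hypothesis needed is simply $g(0)\neq 0$, which is implicit in asking for derivatives of $\ln g$ at the origin.
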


\noindent
We also need the following technical numerical result the proof of which is deferred to the appendix \ref{section:lem20}.
\begin{lemma}\label{lem:int1}
For all $m \geq 2 l$ and $\beta \geq e$
\be
\sum_{k=m}^{\infty} \beta^{-k} \cdot k^l
\leq 3\cdot \beta^{-m} m^l.
\ee
\end{lemma}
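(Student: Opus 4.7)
The plan is to reduce the infinite sum to a geometric series by factoring out the leading term $\beta^{-m} m^l$ and controlling the ratio of successive terms uniformly.

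First I would substitute $j = k - m$ to rewrite
\[
\sum_{k=m}^{\infty} \beta^{-k} k^l = \beta^{-m} m^l \sum_{j=0}^{\infty} \beta^{-j} \left(1 + \frac{j}{m}\right)^l.
\]
It then suffices to show that the residual sum is at most $3$ under the assumptions $m \geq 2l$ and $\beta \geq e$.

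Next I would apply the elementary inequality $1 + x \leq e^x$ to get $(1 + j/m)^l \leq e^{jl/m}$, which turns the residual sum into the geometric series
\[
\sum_{j=0}^{\infty} \left(\frac{e^{l/m}}{\beta}\right)^j.
\]
The hypothesis $m \geq 2l$ gives $l/m \leq 1/2$, hence $e^{l/m} \leq e^{1/2}$, while $\beta \geq e$ yields
\[
\frac{e^{l/m}}{\beta} \leq \frac{e^{1/2}}{e} = e^{-1/2} < 1,
\]
so the geometric series converges and its value is at most $\frac{1}{1 - e^{-1/2}}$. A direct numerical estimate gives $\frac{1}{1 - e^{-1/2}} < 2.55 < 3$, which completes the bound.

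No step here is delicate: the only place where the hypotheses enter is in pinning down the common ratio of the geometric series, so the mild obstacle (if any) is simply verifying numerically that $1/(1 - e^{-1/2}) < 3$, which follows from $e^{-1/2} < 0.607$. All other manipulations are routine.
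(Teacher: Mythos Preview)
Your proof is correct and follows essentially the same approach as the paper: factor out $\beta^{-m}m^l$, shift the index, apply $(1+x)\le e^x$, and sum the resulting geometric series with ratio $e^{l/m}/\beta \le e^{-1/2}$. The paper's argument is line-for-line the same, including the final numerical check that $1/(1-e^{-1/2})<3$.
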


We now present a deterministic algorithm for computing the analytic continuation (see \cite{Ahlfors}) of a degree $n$ polynomial 
$g: \C \rightarrow \C$.

\begin{mdframed}
\begin{algorithm}[Computational analytic continuation]
\label{algorithm:main}
\begin{enumerate}

\noindent
\item
\textbf{Input:} 
An oracle $\mathcal{O}_g$ that takes a number $m$ 
as input and outputs the first $m$ derivatives of $g$ at $z = 0$.
$t$ complex numbers $\Delta_1,\hdots,\Delta_t$, 
a number $\beta>1$,
precision parameter $\delta>0$, and integer $m$ - the number of derivatives
computed at the $0$-th step.
\item
\textbf{Fixed parameters:}

\begin{enumerate}
\item
$\Delta_{min} = \min_i |\Delta_i|$.
\item 
$y_0=0$ and $y_i= \sum_{j=1}^{i-1} \Delta_j $ for each $1 \leq i \leq t-1$ \hfill 
\end{enumerate}

%\% \textbf{Definitions:}
%
%\% Define $g(z):= \Per(J + z \cdot A)$ and $f(z):= \ln g(z)$. 
%
%\% Define $f^{(l)}(0)$ for $1 \leq l\leq m$ to be the $l$-th derivative of $f$ at $z= 0$.

\item
\textbf{Variables:}

\begin{enumerate}
\item $\hat{f}^{(l)}_i$ for $1 \leq l\leq m$ and $0 \leq i \leq t-1$ \hfill \% the $l$-th derivative of $f$ at $y_i$.
\item $s_i$ for $0\leq i \leq t-1$ \hfill \% the number of derivatives at each point $y_i$.
\item $s_0 \leftarrow m$.
\end{enumerate}

\item
\textbf{Main:}
\begin{enumerate}
\item\label{it:it1} Query $\mathcal{O}_g(m)$ to obtain $g^{(0)}(0),\ldots, g^{(m)}(0)$ 

\item\label{it:it2} Using Lemma \ref{lem:bar} and derivatives from step I 
compute 
$\hat{f}^{(l)}_0 \leftarrow f^{(l)}(y_0)$ 
for $1 \leq l\leq m$. 

\item For each $i = 0 \twodots t-1$:

\begin{itemize}
\item Set: $s_{i+1} \leftarrow \frac{\ln \beta}{2 } \frac{s_i}{\ln (2s_i/\Delta_{min})}.$

\item Compute:
$\forall k\in [s_{i+1}], \ \ \ 
\hat{f}^{(k)}_{i+1}  = \sum_{p = 0}^{s_{i} - k} \frac{\hat{f}^{(p + k)}_i}{p!} \Delta_i^{p}.$
\end{itemize}
\label{it:dyn}
\end{enumerate}

\item
\textbf{Output}:

Let $\hat{f}:= \hat{f}^{(0)}_t$ and return ${\cal O} = e^{\hat{f}}$.

\end{enumerate}
\vspace{5mm}
\end{algorithm}
\end{mdframed}

%\begin{claim}[Correctness of algorithm \ref{algorithm:main}]\label{cl:alg}
%
%\noindent
%Consider $g_A(z) = \Per (J + A z)$ and suppose that for each $y_i$
%the function $\tilde{g}_i(z) = g(z - y_i)$ has a root-free disk of radius at least $2 \Delta_i$.
%Then algorithm \ref{algorithm:main} runs in time $\Big(\frac{n t}{\delta}\Big)^{ O(\log(1/\Delta_{min}))^{t}}$ and outputs a number $\hat{f}$
%$$
%\left|\hat{f} - \log(\Per(J + \sigma A))\right| \leq \delta.
%$$
%\end{claim}

\small
\noindent

Prior to establishing the correctness of the algorithm, we define shifted versions of $g(z)$ as follows:
\be
\forall i\in [t] \quad
\tilde{g}_i(z) = g(z + y_i),
\ee
and
\be
\tilde{f}_i(z) = \ln(\tilde{g}_i(z)),
\ee
and denote $f_i^{(l)} = \tilde{f}_i^{(l)}(0)$. Note $y_i$'s are defined in algorithm \ref{algorithm:main}.
We need the following elementary fact which we leave without proof. 
\begin{lemma}
If the closest root of $g$ to the point $y_i$ in the complex plane is $\lambda$, then the closest root of $\tilde{g}_i$ to $z=0$ is also $\lambda$.
\label{lemma:closest}
\end{lemma}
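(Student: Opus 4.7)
The plan is to observe that $\tilde{g}_i$ is a pure translation of $g$, so roots and distances behave in the obvious way. Specifically, by definition $\tilde{g}_i(z) = g(z + y_i)$, so $z^{\star}$ is a root of $\tilde{g}_i$ if and only if $z^{\star} + y_i$ is a root of $g$. This gives an explicit bijection
\[
\phi \colon \{\text{roots of } \tilde{g}_i\} \longrightarrow \{\text{roots of } g\}, \qquad \phi(z^{\star}) = z^{\star} + y_i,
\]
with inverse $\phi^{-1}(\mu) = \mu - y_i$. The bijection is distance-preserving in the relevant sense: for any root $z^{\star}$ of $\tilde{g}_i$,
\[
|z^{\star} - 0| = |z^{\star}| = |(z^{\star} + y_i) - y_i| = |\phi(z^{\star}) - y_i|.
\]

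From here the lemma is immediate. If $\lambda$ is the root of $g$ minimizing $|\mu - y_i|$ over all roots $\mu$ of $g$, then $\phi^{-1}(\lambda) = \lambda - y_i$ is the root of $\tilde{g}_i$ minimizing $|z^{\star}|$, since the bijection $\phi$ preserves these distances. In particular the minimum distance from $y_i$ to the root set of $g$ equals the minimum distance from $0$ to the root set of $\tilde{g}_i$, and the closest roots correspond to one another under the shift $z \mapsto z + y_i$.

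There is essentially no obstacle here: the content is a one-line change of variables, and the lemma is introduced only to let subsequent arguments freely translate root-freeness statements around a point $y_i$ into statements about root-freeness around $0$ for the shifted polynomial $\tilde{g}_i$. The only thing to be slightly careful about is that the phrasing in the statement ("is also $\lambda$") should be read modulo the translation: strictly speaking, the nearest root of $\tilde{g}_i$ to $0$ is $\lambda - y_i$, and it is the \emph{distance} to the nearest root, along with the identification of that root via $\phi$, that is preserved.
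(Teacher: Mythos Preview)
Your argument is correct and is exactly the intended change-of-variables computation; the paper in fact states this lemma as an ``elementary fact which we leave without proof,'' so there is nothing to compare against. Your closing remark is also well taken: as literally phrased the lemma is slightly off (the nearest root of $\tilde{g}_i$ to $0$ is $\lambda - y_i$, not $\lambda$), and what is actually used downstream is only the preservation of the \emph{distance} to the nearest root, which your bijection establishes.
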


\noindent
We now establish correctness:
\begin{claim}\label{cl:alg}

\textbf
{Correctness of algorithm \ref{algorithm:main}}

\noindent
Let $g(z)$ be a polynomial of degree at most $n$, and $f(z) = \ln(g(z))$.
Suppose the inputs to algorithm \ref{algorithm:main} satisfy the following conditions:
\begin{enumerate}
\item
Precision parameter: $\delta \geq n^{-c_1}$ for some constant $c_1>0$.
\item
Interpolation length: $\sigma_t:= \sum_i |\Delta_i| \leq c_2\ln(n)$ for some constant $c_2>0$.
\item
Minimal segment length: $\Delta_{min}= \min_i |\Delta_i| \geq \ln^{-c_3}(n)$ for some constant $c_3>0$.
\item
Number of iterations: $t  \leq c_4 \ln(n) \ln (n)/\ln \ln \ln(n)$ for some constant $c_4>0$.
\item\label{it:ratio}
For each $i$ the ratio of the distance of the closest root of $g(z)$ to $y_i$ to step size $|\Delta_{i+1}|$ is at least $\beta \geq e$.
\item
Number of derivatives at step $0$: $m \leq \ln^{c_5}(n)$ for some constant $c_5>0$.
\end{enumerate}
Then the following holds:
there exists a constant $c_0$ such that if the number of derivatives $m$ that the algorithm queries from ${\cal O}_g$ 
at step $0$ is at least
\be
\ln(n) \cdot (c_0 \cdot \ln\ln(n))^t,
\ee
then output of the algorithm satisfies
\be
{\cal O} = e^{\hat{f}} = g(y_t) \cdot (1 + {\cal E}), \ \ |{\cal E}| = O(\delta).
\ee
where ${\cal E}$ is a complex number.
\end{claim}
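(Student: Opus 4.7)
The plan is to track, at each step $i$, the additive error $\epsilon^{(k)}_i := |\hat f^{(k)}_i - f^{(k)}_i|$ between the algorithm's stored value and the true $k$-th derivative of $f = \ln g$ at $y_i$. The base case $\epsilon^{(k)}_0 = 0$ is automatic, since step \ref{it:it2} computes these derivatives exactly via Lemma \ref{lem:bar}. The algorithm's update rule is precisely the truncation to $s_i - k$ terms of the exact Taylor identity $f^{(k)}(y_i + \Delta_i) = \sum_{p\ge 0} f^{(p+k)}_i \Delta_i^p/p!$, so the one-step error decomposes as
\bas
\epsilon^{(k)}_{i+1} \;\leq\; e^{|\Delta_i|}\,\epsilon_i \;+\; \tau^{(k)}_{i+1},
\eas
where $\epsilon_i := \max_l \epsilon^{(l)}_i$ and $\tau^{(k)}_{i+1}$ is the tail $\sum_{p > s_i-k} |f^{(p+k)}_i|\,|\Delta_i|^p/p!$.

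Next I would bound the tail using the standard estimate $|f^{(l)}_i| \leq l!\, n/R_i^l$ that follows from the logarithmic-derivative expansion $f^{(l)}(z) = -(l-1)!\sum_j (z-z_j)^{-l}$, where $R_i$ is the distance from $y_i$ to the nearest root of $g$. Condition \ref{it:ratio} of the claim gives $|\Delta_i|/R_i \le 1/\beta$, and combining $\binom{p+k}{k} \le (2p)^k/k!$ with Lemma \ref{lem:int1} (applied once the side condition $P := s_i - k + 1 \ge 2k$ is verified) collapses the tail into the clean form
\bas
\tau^{(k)}_{i+1} \;\leq\; 3n\cdot (2s_i/\Delta_{min})^k\cdot \beta^{-P}.
\eas
Substituting the algorithm's choice $s_{i+1} = (\ln\beta/2)\cdot s_i/\ln(2s_i/\Delta_{min})$ and using $P \ge s_i/2$ (which holds because $s_{i+1} \ll s_i$), the $k\ln(2s_i/\Delta_{min})$ and $P\ln\beta$ terms cancel up to a factor of two, producing the uniform bound $\tau^{(k)}_{i+1} \le 3n\,\beta^{-s_i/2}$ for all $k \le s_{i+1}$.

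Unrolling the one-step recursion telescopes the propagation factors, giving $\epsilon_t \le e^{\sigma_t}\cdot 3n\sum_i \beta^{-s_i/2}$; Condition 2 makes $e^{\sigma_t} \le n^{c_2}$ polynomial, so the only remaining task is to show that the smallest $s_i$ (namely $s_{t-1}$) is still at least a sufficiently large multiple of $\ln n$. Conditions 1, 3, 6 together imply $\ln(2s_i/\Delta_{min}) = O(\ln\ln n)$ uniformly throughout the iteration, so the recursion reduces to $s_{i+1} \ge s_i/(c_0' \ln\ln n)$ for some universal constant $c_0'$. Iterating and plugging in $s_0 = m \ge \ln(n)\,(c_0\ln\ln n)^t$ with $c_0$ chosen as a sufficient multiple of $c_0'$ forces $s_{t-1} \ge C\ln n$ for any prescribed $C$; tuning $C$ in terms of $c_1,c_2$ then yields $\epsilon_t \le \delta$. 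Exponentiating converts this additive control into the claimed multiplicative form: $e^{\hat f^{(0)}_t} = g(y_t)\,e^{\epsilon^{(0)}_t} = g(y_t)(1 + {\cal E})$ with $|{\cal E}| = O(|\epsilon^{(0)}_t|) = O(\delta)$, using $|e^x - 1|\le 2|x|$ for small $|x|$.

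The main technical obstacle is the bookkeeping that turns the informal geometric decay $s_{i+1} = \Theta(s_i/\ln\ln n)$ into a closed-form lower bound on $s_{t-1}$ while simultaneously checking the side condition $P \ge 2k$ required by Lemma \ref{lem:int1} at every step; both demand a careful induction with constant slack, which is exactly what the hypothesis $m \ge \ln(n)(c_0\ln\ln n)^t$ is tuned to provide.
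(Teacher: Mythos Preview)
Your proposal is correct and follows essentially the same route as the paper's proof: the same error recursion $\epsilon_{i+1} \le e^{|\Delta_i|}\epsilon_i + \tau_{i+1}$, the same bound on the Taylor tail via the root expansion $f^{(l)}(z) = -(l-1)!\sum_j (z-z_j)^{-l}$ together with Lemma~\ref{lem:int1}, the same use of the update rule $s_{i+1} = (\ln\beta/2)\,s_i/\ln(2s_i/\Delta_{min})$ to cancel the $(2s_i/\Delta_{min})^k$ factor against half of $\beta^{-s_i}$, and the same endgame of showing $s_{t-1} = \Omega(\ln n)$ from the hypothesis $m \ge \ln(n)(c_0\ln\ln n)^t$. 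The only differences are cosmetic (notation, indexing shift, and slightly looser constants in a couple of intermediate bounds).
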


%\begin{claim}\label{cl:alg}
%
%\textbf
%{Correctness of algorithm \ref{algorithm:main}}
%
%\noindent
%Let $g(z)$ be a polynomial of degree at most $n$, and $f(z) = \ln(g(z))$.
%If the inputs
%$\{\Delta_i\}_{i=1}^{t-1}$, $\{s_i\}_{i=1}^t$, $\delta$ and $t$ satisfy
%the following conditions:
%\begin{enumerate}
%\item\label{it:yt}
%$y_t = \sum_{i=1}^{t-1} \Delta_i = O(\ln(n))$.
%\item\label{it:beta}
%For each $y_i = \sum_{j=1}^{i-1} \Delta_j$,  
%the function $\tilde{g}_i(z) = g(z - y_i)$ has a root-free disk of radius at least $\beta \Delta_i$ around $z=0$, for a fixed $\beta>1$.
%\item\label{it:deltamin}
%$\Delta_{min} \geq 1/\ln(n)$ 
%\item\label{it:m} There exists a constant $\alpha>0$ such that
%$(\ln(n)/\ln(\beta)) \cdot (\alpha \ln(\ln(n))/\ln(\beta))^t = m = \poly\log(n)$
%\item\label{it:si}
%$
%s_i = s_{i-1} \ln(\beta)/ (\alpha \ln(\ln(n))).
%$
%\end{enumerate}
%Then
%\be
%\left|\hat{f} - \ln (g (y_t))\right| \leq \delta.
%\ee
%\end{claim}

\begin{proof}
%All parameters and variables are defined in algorithm \ref{algorithm:main} and for clarity we redefine them here again.
Let $f (z):= \ln (g(z))$. It is sufficient to show that 
\be\label{eq:delta}
\left|\hat{f}-f(y_t)\right| \leq \delta.
\ee

%Let $D_i$ be the disk with radius $\beta |\Delta_i|$ around $y_i$. 
%By assumption \ref{it:ratio}
%the function $\tilde{g}_i = g(z - y_i)$ is convergent inside $D_i$:
%\be\label{eq:noroots}
%\forall i, z, \ \ \tilde{g}_i(z) =0 \implies |z| \geq \beta |\Delta_{i+1}|.
%\ee
%Let $s_i$ be the value of $s$ at the $i$-th order of interaction in Main III in the algorithm. The algorithm iteratively approximates the lowest $s_i$ derivatives at $y_i$ and then uses a Taylor series approximation
%to compute an approximation of the lowest $s_{i+1} \ll s_i$ derivatives at the point $y_{i+1}$ which is
%well inside the convergence radius of $D_i$.

%Let $\hat{f}^{l}(y_i)$ be the approximation of ${f}^{l}(y_i)$ using this approach.
% Our goal is then to compute a complex number $\hat{f}$ such that 
%\be
%\Big|\hat{f} - f(1/\eps)\Big|\leq \delta.
%\ee
 Let $\hat{f}^{(k)}_i$ denote the approximation of the $k$-th derivative of $f$ at point $y_i$ obtained by the algorithm.
Using oracle $\mathcal{O}_g$ for $0 \leq l \leq s_0$ we can compute precisely the derivatives of $g$ at $y_0=0$ 
and hence using the first part of Lemma \ref{lem:bar} evaluate the derivatives of $f$ precisely at $y_0$:
\be
\hat{f}^{(l)}_0 \leftarrow f^{(l)} (y_0).
\ee
For $i = 0 \twodots t-1$ (in order) 
algorithm \ref{algorithm:main} computes the lowest $s_{i+1}$ derivatives using the first $s_i$ derivatives as follows:
\be\label{eq:alg1}
\forall l\in [s_{i+1}], \quad
\hat{f}^{(l)}_{i+1}  = \sum_{p = 0}^{s_{i} - l} \frac{\hat{f}^{(p + l)}_i}{p!} \Delta_i^{p}.
\ee
By assumption \ref{it:ratio} and Lemma \ref{lemma:closest} for each $1\leq i \leq t$ the function $\tilde{f}_{i-1}$ is analytical about point 
$y_{i-1}$ in radius $\beta |\Delta_i|$.
Hence, we can write the $l$-th derivative of $\tilde{f}_{i+1}(z)$ 
as the infinite Taylor series expansion of the $l$-th derivative of $\tilde{f}_i(z)$ evaluated at point $\Delta_{i}$:
\be\label{eq:ideal1}
f^{(l)}_{i+1} = \sum_{p=0}^{\infty} \frac{ f^{(p+l)}_i }{ p!} \Delta_i^p.
\ee
Let ${\cal E}^{(l)}_i$ denote the additive approximation error of the $l$-th derivative at step $i\in [t]$
 and $l\in [s_i]$.
\be
\mathcal{E}^{(l)}_i:= \Big| \hat{f}^{(l)}_i - {f}^{(l)} _{i}\Big|, \hspace{1cm}  \forall l\in [s_{i}]
\ee
also let $\delta_i$ denote the worst-case error for all derivatives at step $i$
\be
\forall i\in [t], \quad
\delta_i:= \max_{0 \leq l \leq s_i} \Big(\mathcal{E}^{(l)}_i\Big).
\ee
Using the triangle inequality on equations \ref{eq:alg1} w.r.t. \ref{eq:ideal1} we
$\forall i\in [t],  l\in [s_i]$ get:
\begin{align}
\nonumber
\mathcal{E}^{(l)}_i &\leq 
\sum_{p=0}^{s_{i-1} - l} \frac{|\hat{f}^{(p+l)} _{i-1}-{f}^{(p+l)} _{i-1}|}{p!} |\Delta_i|^{p}\\ 
&+ 
\sum_{p=s_{i-1} - l +1}^{\infty} \frac{|{f}^{(p+l)} _{i-1} |}{p!} |\Delta_i|^{p},\\ \nonumber
&= \sum_{p=0}^{s_{i-1} - l} \frac{\mathcal{E}^{(p+l)}_{i-1}}{p!} |\Delta_{i}|^{p}\\ 
&+ 
\sum_{p=s_{i-1} - l +1}^{\infty} \frac{|{f}^{(l+p)}_{i-1} |}{p!} |\Delta_{i}|^p,\\ \nonumber
&\leq \delta_{i-1} e^{|\Delta_i|}\\
&+ \sum_{p=s_{i-1} - l +1}^{\infty} \frac{|{f}^{(p+l)}_{i-1} |}{p!} |\Delta_i|^{p},\\
&=: \delta_{i-1} e^{|\Delta_i|} +\kappa_{i,l},
\label{equation:err}
\end{align}
where
\bea
\kappa_{i,l} &:=& \sum_{p=s_{i-1} - l +1}^{\infty} \frac{|{f}^{(p+l)}_{i-1} |}{p!} |\Delta_i|^{p},\\
&=& 
\sum_{p=s_{i-1} - l +1}^{\infty} \frac{|\tilde{f}_{i-1}^{(p+l)} (0) |}{p!} |\Delta_i|^{p}. 
\label{eq:kappa4}
\eea
At this point, we focus on placing an upper bound on $\kappa_{i,l}$. Fix any index $i$ and
let $z_1, \ldots, z_n$ be the roots of the shifted function $\tilde{g}_{i-1}$. Then
\be
\tilde{g}_{i-1} (z) = \tilde{g}_{i-1} (0) \left(1- \frac{z}{z_1}\right) \ldots \left(1- \frac{z}{z_n}\right). 
\ee
Therefore we can expand its logarithm as the infinite series:
\be
\forall k, \quad
\tilde{f}^{(k)}_{i-1} (0) = - \sum_{j=1}^n\frac{(k-1)!}{z^k_j}. 
\ee
Using these derivatives and the triangle inequality $\forall l\in [s_{i}]$ we can bound equation \ref{eq:kappa4}
\bea
\kappa_{i,l} &\leq& \sum_{j=1}^n \sum_{p=s_{i-1} - l +1}^{\infty} \frac{(l + p-1)!}{p!} \frac{|\Delta_i|^{p}}{|z|^{p + l}_j},\\
&\leq&  \sum_{j=1}^n \frac{1}{|z_j|^l}\sum_{p=s_{i-1} - l +1}^{\infty} (2p)^l \frac{|\Delta_i|^p}{|z|^{p}_j}.
\eea
The last inequality is by using $p \geq s_{i-1} - l \geq l - 1$, which is true because 
$l\leq s_i \leq s_{i-1}/2$ by the
choice of $s_i$ in the algorithm.
By assumption \ref{it:ratio} and Lemma \ref{lemma:closest} all roots of $\tilde{g}_{i-1}$ are located outside ${\cal B}_{|\Delta_i| \cdot \beta}(0)$:
\be
\forall j, \ \ |z_j| \geq \beta |\Delta_i|.
\ee
Therefore since $\beta \geq e$ in step \ref{it:ratio} of claim \ref{algorithm:main}
\bea
\kappa_{i,l} 
&\leq&  \frac{n}{|\beta \Delta_i|^l}\sum_{p=s_{i-1} - l +1}^{\infty} (2p)^l \beta^{-p},\\
& =& \frac{n}{(\beta |\Delta_i|/2)^l}\sum_{p=s_{i-1} - l +1}^{\infty} p^l \beta^{-p}.
\eea
%Let $c =\frac{2}{\ln \beta}$. 
One can check that our choice of $s_i$ in algorithm \ref{algorithm:main} satisfies $s_{i-1} - l + 1  >  2 s_i \ln \beta$, therefore we can use Lemma \ref{lem:int1}.
Using this lemma $\forall i\in [t], \ \ l\in [s_i]$ above equation implies:
\bea
\kappa_{i,l} &\leq& 3 \frac{n}{(\beta |\Delta_i|/2)^l} (s_{i-1}-l+1)^l \beta^{- s_{i-1} +l -1},\\ 
&\leq&  3 \frac{n}{( |\Delta_i|/2)^l} (s_{i-1})^l \beta^{- s_{i-1}},\\ 
&\leq&   3 \cdot n \left(\frac{2s_{i-1}}{|\Delta_i|}\right)^l \beta^{- s_{i-1}}, \\
&:=& \kappa_i.
%c\cdot n \left(\frac{s_{i-1}}{|\Delta_i|}\right)^{s_i} \beta^{- s_{i-1}} =: \kappa_i.
\label{eq:kappa5}
\eea
Note the last bound does not depend on $l$ and we define as a number $\kappa_i$. 
Using the definition of $\kappa_i$ in equation \ref{eq:kappa5} equation \ref{equation:err} becomes
\be
\delta_i \leq \delta_{i-1} e^{|\Delta_i|} + \kappa_i.
\ee
we solve this recursion. Let $\sigma_i = \sum_{j=0}^i |\Delta_j|$. Note that since we compute the derivatives exactly at $y_0$ then $\delta_0=0$. Hence
\be
\delta_t \leq \sum_{i=1}^t \kappa_i e^{\sigma_t - \sigma_i}.
\ee
Our objective is to show that $\delta_t \leq \delta$. To do this it is enough to show that for all $i\in [t]$:
\be
\kappa_i e^{\sigma_t-\sigma_i} \leq \frac{\delta}{t}.
\label{eq:rec12}
\ee
Using definition \ref{eq:kappa5}, equation \ref{eq:rec12} is equivalent to
\be
e^{\sigma_t-\sigma_i} \frac{3\cdot n t}{\delta} \left(\frac{2 s_{i-1}}{|\Delta_{i}|}\right)^{s_i} \beta^{- s_{i-1}}   \leq 1.
\label{eq:hh}
\ee
Since $\sigma_i \geq 0$ for all $i$ it suffices to show
\be
e^{\sigma_t} (3\cdot nt/\delta) (2s_{i-1}/\Delta_{min})^{s_i} \cdot  \beta^{-s_{i-1}} \leq 1.
\label{eq:86}
\ee
At each recursion of the algorithm given $s_{i-1}$ we choose
\be
s_i = \frac{\ln \beta}{2 } \frac{s_{i-1}}{\ln \frac{2s_{i-1}}{\Delta_{min}}}.
\label{eq:de}
\ee
plugging this at the left hand side of equation \ref{eq:86} we get
\ba
e^{\sigma_t} (3\cdot nt/\delta) (2s_{i-1}/\Delta_{min})^{s_i} \beta^{-s_{i-1}} &\leq
e^{\sigma_t} (3\cdot nt/\delta) \beta^{-s_{i-1}/2},\\
&\leq e^{\sigma_t} (4nt/\delta) e^{-s_t \ln(\beta)/2}.
\ea

Consider the expression above: Since by our assumptions $\sigma_t = O(\ln(n)), \delta = 1/\poly(n), t = O(\ln(n)), \beta = O(1)$ then
to establish Equation \ref{eq:delta}
it is sufficient to show that whenever $s_0 =m$ 
satisfies the lower bound of the claim
the solution of the recursion in equation \ref{eq:de} implies 
\be
s_t = \Omega(\ln(n)).
\ee
So now, we recall the assumption in the claim that $m$ is at most poly-logarithmic and $\Delta_{min}$ is at least inverse poly-logarithmic. Hence
\be
\frac{2s_i}{\Delta_{min}}  \leq \frac{2m}{\Delta_{min}} = \poly\log(n).
\ee
Thus
\be
\ln \left(\frac{s_{i-1}}{\Delta_{min}}\right) \cdot \frac{\ln \beta}{2}
< c_0 \ln \ln(n).
\ee
for some constant $c_0 > 0$.
Hence, whenever
\be
m = \ln(n) \cdot (c_0' \ln \ln(n))^t,
\ee
for some other constant $c_0'>c_0>0$
we have that
\be
s_t = \Omega(\ln(n)).
\ee
Finally, since we assume that $t = O(\ln\ln(n)/\ln\ln\ln(n))$ it follows that
\be
m = \ln^{O(1)}(n)
\ee
thereby achieving consistency with claim assumptions.

\end{proof}

%\begin{remark}
%The algorithm \ref{algorithm:main} was shown here to be correct for interpolation along a {\it linear} segment. \saeed{I thought we showed the alg works for arbitrary curve and not just linear segment}
%However, it can be easily generalized to analytic curves which are polynomials of low degree:
%consider a polynomial $\gamma$ parameterized as $\gamma: [0,1]\to \C$, where $\gamma$ has degree $t$.
%To apply the algorithm one would consider the composed function $g \circ \gamma$,
%and then use again a {\it linear} curve $[0,1]$ in the complex plane to interpolate the composed function.
%The oracle ${\cal O}_g$ then is then replaced by computing the derivatives of $g\circ \gamma$ using the chain rule.
%The rest of the procedure remains essentially the same.
%\end{remark}

%%%%%%%%%%%%%%%%%%%%%%%%%%%%%%%%%%%%%%%%%%%%%%%%%%%

\section{Approximation of permanents of random matrices of vanishing mean}
\label{section:algo}

\subsection{Main theorem}

\begin{theorem}\label{thm:main}
Let $\delta = 1/\poly(n)$ and $\eps \geq [\ln\ln\ln(n)/\ln\ln(n)]^{1/7}$.
There exists $\mu\in [\eps, \eps + 2 \eps^2]$
such that for any distribution ${\cal D}(\mu,1)^{n\times n}$
%and random matrix $A\sim {\cal D}(\mu,1)^{n\times n}$
there exists an algorithm running in quasi-polynomial time that computes a number ${\cal O}$ that for $1-o(1)$ fraction of matrices $A \sim {\cal D}(\mu,1)^{n\times n}$ satisfies
\be
{\cal O} =  \Per(A) \cdot (1 + {\cal E}), \ \, {\cal E}\in \C,\ \  |{\cal E}| = O(\delta).
\ee
In particular the algorithm solves $\mathrm{GPE}^{\eps}_{\times}$.
\end{theorem}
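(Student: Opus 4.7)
The plan is to reduce the computation of $\Per(A)$ for $A\sim\mathcal{D}(\mu,1)^{n\times n}$ to evaluating the random interpolating polynomial $g_{A'}(z) = \Per(J + zA')$ at the single point $z = 1/\mu$ for the recentered ensemble $A' := A - \mu J \sim \mathcal{D}(0,1)^{n\times n}$, and then apply computational analytic continuation along a piecewise linear curve produced by Lemma \ref{lem:root2}. The identity $\Per(A) = \mu^n \, g_{A'}(1/\mu)$ means that any multiplicative $(1+O(\delta))$ approximation of $g_{A'}(1/\mu)$ transfers directly to $\Per(A)$.

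First I would apply Lemma \ref{lem:root2} with parameter $\eps$ to produce $a\in\C$ with $|a|\leq 2\eps$ and $b \in [1/\eps, 1/\eps + 2\eps]\subset \R$ such that the tube $\mathcal{T}(\gamma_{a,b}, w)$ of width $w = \pi\eps^6$ is root-free for $g_{A'}$ on a $1 - 4\eps$ fraction of $A'$. Set $\mu := 1/b$ (so $\mu$ lies in the narrow window near $\eps$ of the theorem, up to a second-order correction of order $\eps^3$). Next, I would discretize $\gamma_{a,b}$ into $t$ equal steps $\Delta_1,\dots,\Delta_t$ of length $|\Delta_i| = w/\beta$ with $\beta = e$, so that at each anchor point $y_i$ the closest root of $g_{A'}$ lies at distance $\geq w = \beta|\Delta_{i+1}|$. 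Since the total arc length of $\gamma_{a,b}$ is at most $2/\eps$, the number of steps is
\begin{equation}
t \;=\; O\bigl(1/(\eps\, w)\bigr) \;=\; O(1/\eps^7),
\end{equation}
which is exactly what the threshold $\eps \geq (\ln\ln\ln n/\ln\ln n)^{1/7}$ is calibrated to meet.

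I would then feed this discretization to Algorithm \ref{algorithm:main} and invoke Claim \ref{cl:alg}. The six preconditions are all straightforward against our parameters: $\delta = 1/\poly(n)$; $\sigma_t \leq 2/\eps = o(\ln n)$; $\Delta_{\min} = \Theta(\eps^6) \gg \ln^{-c_3}(n)$; $t = O(1/\eps^7) \leq c_4 \ln\ln n / \ln\ln\ln n$ by the hypothesis on $\eps$; $\beta \geq e$ by construction; and, conditional on the root-avoidance event, item 5 of the claim holds at every $y_i$. The claim then shows that the initial derivative count needed is $m = \ln(n)\,(c_0 \ln\ln n)^t = \poly\log(n)$, so by Lemma \ref{lem:bar} the runtime is $n^{O(m)} = n^{\poly\log(n)}$, i.e.\ quasi-polynomial, and the output $e^{\hat f}$ satisfies $e^{\hat f} = g_{A'}(1/\mu)(1 + \mathcal{E})$ with $|\mathcal{E}| = O(\delta)$. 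Multiplying by $\mu^n$ yields the desired approximation of $\Per(A)$, valid on a $1 - 4\eps = 1 - o(1)$ fraction of $A \sim \mathcal{D}(\mu,1)^{n\times n}$, which is precisely the $\mathrm{GPE}^{\eps}_\times$ guarantee.

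The step I expect to be the main obstacle is the parameter balancing that drives the exponent $1/7$: the tube width from Lemma \ref{lem:root2} shrinks like $\eps^6$, the target interpolation length grows like $1/\eps$, and together they force the number of discretization steps $t \sim \eps^{-7}$, which must still fit within the $O(\ln\ln n/\ln\ln\ln n)$ budget that Claim \ref{cl:alg} can tolerate while keeping $m$ poly-logarithmic. Widening the tube would let too many matrices host a root inside it and degrade the success probability, shrinking the step would inflate $t$, and any inflation of $t$ pushes $m$ beyond $\poly\log(n)$ and breaks the quasi-polynomial runtime. Once this calibration is pinned down, the remainder is a clean composition: Lemma \ref{lem:root2} provides the geometric certificate, Claim \ref{cl:alg} turns it into numerical derivatives, and the reciprocal reparameterization $\mu = 1/b$ translates the result back to the original ensemble.
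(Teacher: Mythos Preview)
Your proposal is correct and follows essentially the same route as the paper: apply Lemma~\ref{lem:root2} to get a root-free tube of width $w=\pi\eps^6$ ending at a real point $b\in[1/\eps,1/\eps+2\eps]$, discretize with step size $w/e$ so that $t=O(\eps^{-7})$, verify the hypotheses of Claim~\ref{cl:alg} (in particular that the $\eps^{-7}$ step count fits the $O(\ln\ln n/\ln\ln\ln n)$ budget exactly when $\eps\ge(\ln\ln\ln n/\ln\ln n)^{1/7}$), and rescale by $b^{-n}=\mu^n$. The only cosmetic difference is that you frame the argument by starting from $A\sim\mathcal{D}(\mu,1)$ and recentering, whereas the paper starts from $A'\sim\mathcal{D}(0,1)$ and then identifies $J+bA'$ with $b\cdot A$; you also correctly flag that $\mu=1/b$ lands in an interval of width $O(\eps^3)$ near $\eps$ rather than literally $[\eps,\eps+2\eps^2]$, a harmless imprecision that the paper's own proof shares.
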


\begin{proof}
Let
\be
A' \sim {\cal D}(0,1)^{n\times n}.
\ee
Set $w = \pi \eps^6$.  
By Lemma \ref{lem:root2} there exists a tube ${\cal T}(\gamma_{a,b},w)$
for $|a| \leq 2 \eps, b\in [1/\eps,1/\eps + 2 \eps]$ 
that is root-free w.p. $g_{A'}(z)$
at least $1 - 4\eps$ over choices of $A'$.
We will now use the curve $\gamma_{a,b}$ to interpolate the value of the function $g(z) = \Per(J + z A')$
from $z=0$ to $z=b \in \R$.

Suppose $A'$ is such a matrix.
Divide each of the two linear segments comprising $\gamma_{a,b}$ into small equal segments of size $\Delta_{min} = w/e =  \pi \eps^6 /e$ each. 
Enumerate these segments together as $\Delta_1,\hdots, \Delta_t$ 
Then the number of segments $t$ is at most the length of $\gamma_{a,b}$ divided by $\Delta_{min}$.
By Lemma \ref{lem:root2} the length of $\gamma_{a,b}$ is at most $2/\eps$ and so 
\be
t = |\gamma_{a,b}| / \Delta_{min} \leq 2 e / (\eps w) \leq  \frac {2 e} \pi \eps^{-7} \leq e \eps^{-7}
\ee
We run algorithm \ref{algorithm:main} for the following parameters:
\begin{enumerate}
\item
matrix $A'$
\item
$\Delta_1,\hdots, \Delta_t$.  Note that $y_t = \sum_{i=1}^t \Delta_i = b$ by definition of $\gamma_{a,b}$.
\item $\beta = e$.
\item
precision parameter $\delta$
\item
$m = \ln(n) \cdot  (c_0 \cdot \ln \ln(n))^t$,
where $c_0$ is the constant implied by Claim \ref{cl:alg}
\end{enumerate}

Recall the conditions of Claim \ref{cl:alg}:
\be
\delta = n^{-\Omega(1)}, \sigma_t = O(\log(n)), 
\Delta_{min} = 1/\poly\log(n),
\ee
\be
 t = O(\ln\ln (n)/\ln\ln\ln(n)),\beta = O(1),m = \poly\log(n).
\ee

We now verify the
conditions of Claim \ref{cl:alg} in the order they appear:
\begin{enumerate}
\item
$\delta = n^{-\Omega(1)}$ by assumption. 
\item
The total length of all segments is at most $2/\eps$
by Lemma \ref{lem:root2}.
\item
$\Delta_{min} = \frac \pi e \eps^6 = \Theta (\frac{\ln \ln \ln n}{\ln \ln n})^{6/7} > 1/ \poly\log(n)$.
\item
$t \leq e \eps^{-7} = O( \frac{\ln \ln n}{\ln \ln \ln n} )$. 
\item
For each $i$ the ratio of the distance of the closest root of $g(z)$ to $y_i$ to step size $|\Delta_{i+1}|$ is at least $\beta \geq e$:
this follows by construction since ${\cal T}$ is root-free with parameter $w$, and $\Delta_{min} = w/e$.
\item
The above value of $t$ implies that
\ba
m &= 
\ln(n) \cdot (c_0 \ln\ln(n))^t\\
&= 
\ln(n) \cdot  \ln\ln (n)^{O(\ln\ln(n)/\ln\ln\ln(n))}\\
&= \poly\log(n),
\ea
\end{enumerate}
Hence we can invoke Claim \ref{cl:alg}.
By this claim for $|{\cal E}| \leq \delta$ we have:
\be
e^{\hat{f}} &= 
 \Per(J + y_t \cdot A') \cdot (1 + {\cal E})\\
 &= 
\Per(J + b \cdot A') \cdot (1 + {\cal E}).
\ee
The matrix $J + b \cdot A'$ is distributed as $b \cdot A$ where  $A \sim {\cal D}(\mu,1)^{n\times n}$
for some $\mu\in [\eps, \eps + 2 \eps^2]$.
Thus
\be
b^{-n} e^{\hat{f}} = \Per(A) \cdot (1 + {\cal E}), \quad |{\cal E}| \leq \delta
\ee
where $A\sim {\cal D}(\mu,1)^{n\times n}$.

%Consider the interpolating polynomial
%\be
%g_A(z) = \Per( J + z A).
%\ee
%We wish to compute
%\be
%g_A(1/\eps) = \Per( J + 1/\eps A) = \eps^{-n} \cdot  \Per(  A  + \eps J).
%\ee
%We use algorithm \ref{algorithm:main} to compute some value $\hat{f}$ such that
%\be
%|\hat{f} - \ln g_A(1/\eps)| < \delta.
%\ee
%and then we return $\eps^n e^{\hat{f}}$.

\paragraph{Run-time:}
Algorithm \ref{algorithm:main} requires an oracle $\mathcal{O}_g$ at step \ref{it:it1} for computing derivatives of $g$ at $z=y_0=0$:  Using item (2) of Lemma \ref{lem:bar} we can implement $\mathcal{O}_g$ in $n^{O(m)}$ time.
Next, to compute the $m$ derivatives of $f(z) = \ln(g(z))$ at $z=y_0$ at step \ref{it:it2} using $\mathcal{O}_g$ we invoke item (1) of the lemma, and compute them in time at most $O(m^2 n^{O(m)}) = n^{O(m)}$.
Finally, we update at most $m$ derivatives along $t = O(\ln(n))$ steps, where each update requires at most $m$ summands.
This results in total complexity 
\be
O(t \cdot m^2) \cdot n^{O(m)} = n^{O(m)} = 2^{\poly\log(n)}.
\ee

\end{proof}

\subsection{Natural biased distributions}

The following corollaries immediately follow from this theorem by choosing $\eps = 1/\poly\log\log(n)$:
\begin{corollary}

\noindent
\begin{enumerate}
\item
\textbf{Biased Gaussian:}
There exists $\mu = 1/\poly\log\log(n), \delta = 1/\poly(n)$ and a deterministic algorithm that for $A\sim {\cal N}(\mu,1,\C)^{n\times n}$
computes a $1 + \delta$ multiplicative approximation of $\Per(A)$ on a $1-o(1)$ fraction of matrices $A$.
\item
\textbf{"Slightly-biased" Bernoulli:}
There exists $\mu = 1/\poly\log\log(n), \delta = 1/\poly(n)$ and a deterministic algorithm that for $A\sim {\rm Bern}(\mu)^{n\times n}$
computes a $1 + \delta$ multiplicative approximation of $\Per(A)$ on a $1-o(1)$ fraction of matrices $A$.

\end{enumerate}
\end{corollary}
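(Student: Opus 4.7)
The plan is to reduce approximating $\Per(A)$ for $A \sim \mathcal{D}(\mu,1)^{n\times n}$ to evaluating the univariate random polynomial $g_{A'}(z) = \Per(J + z A')$ at a real point $b \approx 1/\eps$, where $A' \sim \mathcal{D}(0,1)^{n\times n}$. The key identity is $J + b A' = b\,(J/b + A')$, so $\Per(J + b A') = b^n \Per(A)$ for a matrix $A$ whose entries have mean $1/b$ and unit variance. Therefore, if we can produce a multiplicative $(1+\delta)$ approximation of $g_{A'}(b)$, dividing by $b^n$ yields the desired approximation of $\Per(A)$ for $\mu = 1/b$ in an interval of width $O(\eps^2)$ around $\eps$. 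Since $g_{A'}(0) = \Per(J) = n!$ and the low order derivatives of $g_{A'}$ at the origin can be computed by Lemma \ref{lem:bar}, the remaining task is to transport the value of $\ln g_{A'}$ from $0$ to $b$.

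Next, I would invoke Lemma \ref{lem:root2} with tube width $w = \pi \eps^6$. This yields, with probability at least $1 - 4\eps$ over $A'$, a piecewise-linear curve $\gamma_{a,b}$ of total length at most $2/\eps$, connecting $0$ to some $b \in [1/\eps, 1/\eps + 2\eps]$, whose $w$-tube contains no root of $g_{A'}$. I would then subdivide $\gamma_{a,b}$ into $t$ equal segments $\Delta_1,\ldots,\Delta_t$ of length $\Delta_{\min} = w/e = \pi \eps^6 / e$, so that $t \leq (2/\eps)/\Delta_{\min} = O(\eps^{-7})$. By construction, every intermediate point $y_i = \sum_{j<i}\Delta_j$ is at distance at least $w = e\,\Delta_{\min}$ from any root of $g_{A'}$, supplying the root-avoidance ratio $\beta = e$ required by Claim \ref{cl:alg}.

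With this data in hand I would run Algorithm \ref{algorithm:main} with inputs $(\Delta_1,\ldots,\Delta_t)$, $\beta = e$, the oracle from Lemma \ref{lem:bar}, precision $\delta = 1/\poly(n)$, and starting number of derivatives
\begin{equation}
m = \ln(n) \cdot (c_0 \ln\ln n)^t,
\end{equation}
where $c_0$ is the constant provided by Claim \ref{cl:alg}. I would then verify each hypothesis of that claim in turn: $\delta$ is inverse polynomial by assumption; $\sigma_t \leq 2/\eps = O(\log n)$; $\Delta_{\min} = \Theta\bigl((\ln\ln\ln n / \ln\ln n)^{6/7}\bigr) = 1/\poly\log(n)$; $t = O(\eps^{-7}) = O(\ln\ln n / \ln\ln\ln n)$; the root-avoidance ratio equals $\beta = e$ by the tube property; and finally $m = \ln(n) \cdot (\ln\ln n)^{O(\ln\ln n/\ln\ln\ln n)} = \poly\log(n)$. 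Claim \ref{cl:alg} then gives $e^{\hat{f}} = g_{A'}(b)(1+\mathcal{E})$ with $|\mathcal{E}| = O(\delta)$, so the output $\mathcal{O} = b^{-n} e^{\hat{f}}$ is the desired multiplicative approximation of $\Per(A)$.

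The main obstacle is the tight balance of parameters. On one hand, Lemma \ref{lem:root2} only guarantees a root-free corridor of width $w \sim \eps^6$, which forces $\Delta_{\min} \sim \eps^6$ and hence $t \sim \eps^{-7}$. On the other hand, Claim \ref{cl:alg} demands $t = O(\ln\ln n / \ln\ln\ln n)$ if the initial number of derivatives $m$ is to remain polylogarithmic. These constraints meet precisely at $\eps = [\ln\ln\ln n / \ln\ln n]^{1/7}$, which is exactly the threshold stated in the theorem. The failure probability for the tube construction is $4\eps = o(1)$, delivering the $1-o(1)$ success rate. For the runtime, each derivative call costs $n^{O(m)}$ by Lemma \ref{lem:bar}, and the algorithm performs $O(t \cdot m^2)$ Taylor updates, giving overall complexity $n^{O(m)} = 2^{\poly\log(n)}$, i.e.\ quasi-polynomial as claimed.
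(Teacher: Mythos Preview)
Your proposal is correct and follows exactly the argument the paper uses to prove Theorem~\ref{thm:main}; the paper itself, however, states the corollary as an immediate consequence of that theorem by simply choosing $\eps = 1/\poly\log\log(n)$ and noting that ${\cal N}(\mu,1,\C)^{n\times n}$ and ${\rm Bern}(\mu)^{n\times n}$ are both instances of ${\cal D}(\mu,1)^{n\times n}$. In other words, you have unpacked and reproved the main theorem rather than merely citing it, so your write-up is more detailed than strictly necessary for the corollary but matches the paper's underlying argument step for step.
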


\begin{remark}
\label{remark:subexp}
Using a tighter analysis of Algorithm \ref{algorithm:main}, one can see that, for constant $\beta$, the following is an upper bound on the number of derivatives needed
\be
m = O\Big(\ln \frac{1}{|\Delta_{min}|} + t \ln t\Big)^t
\label{eq:error-general}
\ee
We did not include a detailed proof here, since it is basically in the same spirit as the analysis given in the proof of Claim \ref{cl:alg}. The proof of Theorem \ref{thm:main} suggests that for the ensemble with $\eps$ mean, we need to choose $t = O(\frac 1 {\eps^7})$ and $|\Delta_{\text{min}}| = O(\eps^6)$. As a result, the upper bound we use for the number of derivatives as a function of the mean value $\eps$ behaves like
\be
m = \exp {O(\frac {\ln \frac 1 \eps}{\eps^7})},
\ee
and therefore, using these parameters and for $\delta = \frac 1 {\poly(n)}$, the running time of Algorithm \ref{algorithm:main} is bounded from above by $\exp ( \ln n \cdot \exp O(\frac {\ln \frac 1 \eps}{\eps^7}))$. Choosing $\eps = (\frac 1 {\ln n})^{1/9}$, we get the upper bound $\exp \exp O(\ln^{0.99} n)$ which grows strictly slower than $\exp(n^\theta)$, for any constant $\theta$.
\end{remark}

\section*{Acknowledgement}
We thank Scott Aaronson, Alexander Barvinok, and Aram Harrow for helpful discussions. LE and SM were partially supported by NSF grant CCF-1629809.

\bibliographystyle{hyperabbrv}
%\bibliography{refs}

\appendix

\section{Inverse polynomial pean is as hard as zero mean}
\label{app:bs}

\textbf{Inverse polynomial pean is as hard as zero mean}--Consider the problem of computing the permanent to additive error defined in \cite{AA13}:

\begin{definition}  

\textbf{Additive approximation of the permanent}

\noindent
$\mathrm{GPE}^\mu_\pm (\eps, \delta)$: Given $\delta, \eps > 0$ and $A \sim \mathcal{N}(\mu,1,\C)^{n\times n}$, output a number $r$ such that with probability at least $1- \delta$
\be
\Big| \Per (A) - r\Big| \leq \sqrt{n!} \cdot \eps.
\ee
\end{definition}

\noindent
In \cite{AA13} it was shown that a hardness assumption on this problem implies that quantum computers
can sample from certain distributions that cannot be efficiently sampled by classical computers, thereby establishing a
so-called ``quantum supremacy''.
In this section we show that $\mathrm{GPE}_\pm$ and $\mathrm{GPE}_\times$ are essentially the same
for Gaussian matrix $A$ with mean $0$ and with mean $1/\poly(n)$:
\begin{theorem}

\textbf{Inverse polynomial mean is as hard as 0 mean}

\noindent
For any $K > 0$ and $\mu \leq \frac{1}{\sqrt{n-1}}$:

\begin{enumerate}
\item $\mathrm{GPE}^0_\pm(\eps + \mu \cdot n K, \delta + \frac{1}{K^2}) \preceq \mathrm{GPE}^\mu_\pm (\eps, \delta)$.

\item 
Assume that the following conjecture on anti-concentration of the permanent holds for the standard Gaussian matrix 
$A\sim {\cal N}(0,1)^{n\times n}$:
\be
\Pr_A \Big( |\Per (A)| > \frac {\sqrt{n!}}{n^c} \Big) \geq 1-\delta.
\ee
Then 
\be
\mathrm{GPE}^0_\times \Big((\eps + \mu \cdot n K) n^{c}, \delta + \delta' + \frac{1}{K^2}\Big)  \preceq \mathrm{GPE}^\mu_\times(\eps, \delta),
\ee
where $\mathrm{GPE}^0_\times(\eps, \delta)$ is defined in definition \ref{GPE1}.

\end{enumerate}
\label{thm:gpe}
\end {theorem}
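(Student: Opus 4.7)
The plan is a direct efficient reduction via the map $A\mapsto A':=A+\mu J$, which sends $\mathcal{N}(0,1,\C)^{n\times n}$ to $\mathcal{N}(\mu,1,\C)^{n\times n}$. Given an oracle for the $\mu$-mean problem, I would query it on $A'$ and output its answer $r$ as the estimator of $\Per(A)$. What needs to be shown is that the random variable $D:=\Per(A+\mu J)-\Per(A)$ is small with high probability; given such a bound, both parts of the theorem follow by triangle inequalities, with part (2) additionally invoking anti-concentration to convert an additive bound into a multiplicative one.

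The heart of the proof is a second-moment estimate for $D$. First apply the submatrix expansion
\begin{align*}
\Per(A+\mu J) \;=\; \sum_{k=0}^n \mu^{n-k}(n-k)!\sum_{|S|=|T|=k}\Per(A[S,T]),
\end{align*}
which follows from the general identity $\Per(X+Y)=\sum_{|S|=|T|}\Per(X[S,T])\Per(Y[S^c,T^c])$ together with $\Per((\mu J)[U,V])=|U|!\,\mu^{|U|}$ for $|U|=|V|$. The $k=n$ term is precisely $\Per(A)$, so $D$ is the sum over $k<n$. For the zero-mean complex Gaussian ensemble, every monomial in $\Per(A[S,T])\overline{\Per(A[S',T'])}$ has zero expectation unless each $A_{ij}$ is paired with a conjugate $\overline{A_{ij}}$, which forces $(S,T)=(S',T')$; combined with $\E|\Per(A[S,T])|^2=k!$ this gives
\begin{align*}
\E_A|D|^2 \;=\; \sum_{k=0}^{n-1}\mu^{2(n-k)}\bigl((n-k)!\bigr)^2\binom{n}{k}^2 k! \;=\; (n!)^2\sum_{m=1}^n \frac{\mu^{2m}}{(n-m)!},
\end{align*}
where the right-hand equality uses the algebraic simplification $\binom{n}{k}^2((n-k)!)^2 k! = (n!)^2/k!$ and the change of variable $m=n-k$. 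The hypothesis $\mu\leq 1/\sqrt{n-1}$ makes the ratio $\mu^2(n-m)$ of consecutive summands at most $1$ for $m\geq 1$, so the series is dominated by its $m=1$ term and bounded by $n\mu^2/(n-1)!$, giving $\E|D|^2 \leq n^2\mu^2\cdot n!$. Markov's inequality then yields $|D|\leq Kn\mu\sqrt{n!}$ with probability at least $1-1/K^2$.

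For part (1), the additive oracle guarantee $|r-\Per(A')|\leq\sqrt{n!}\,\eps$ (probability $\geq 1-\delta$) combines with this concentration of $D$ via the triangle inequality: on the intersection of both events (probability $\geq 1-\delta-1/K^2$),
\begin{align*}
|r-\Per(A)|\;\leq\;|r-\Per(A')|+|D|\;\leq\;\sqrt{n!}\,(\eps+Kn\mu),
\end{align*}
which is the claimed $\mathrm{GPE}^0_\pm$ guarantee. For part (2), the multiplicative oracle gives $|r-\Per(A')|\leq\eps|\Per(A')|$, and using $|\Per(A')|\leq|\Per(A)|+|D|$ together with the anti-concentration $|\Per(A)|\geq \sqrt{n!}/n^c$ (w.p. $\geq 1-\delta'$) produces $|r-\Per(A)|\leq \eps|\Per(A)|+(1+\eps)Kn\mu\sqrt{n!}$. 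Dividing both sides by $|\Per(A)|\geq\sqrt{n!}/n^c$ turns the right-hand side into a multiplicative error of order $(\eps+Kn\mu)\,n^c$, with total failure probability $\delta+\delta'+1/K^2$.

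The main obstacle is the second-moment identity for $|D|^2$: the Gaussian orthogonality (Wick's rule) collapses the apparently-complicated double sum over submatrix index pairs $(S,T),(S',T')$ into the diagonal sum $\sum_{|S|=|T|=k}k!$, and then the binomial algebra gives the clean closed form. Once that is in place, the remaining steps---triangle inequality, Markov, and the anti-concentration conversion of additive error to multiplicative---are entirely routine.
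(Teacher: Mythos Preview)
Your proposal is correct and follows essentially the same approach as the paper: the reduction $A\mapsto A+\mu J$, the second-moment bound $\E|D|^2\le n^2\mu^2\,n!$ via the submatrix expansion of $\Per(A+\mu J)$ and orthogonality of submatrix permanents, Markov's inequality, and then triangle inequality plus (for part~2) anti-concentration. The paper packages the second-moment estimate as a separate lemma but the computation is identical to yours; your treatment of part~(2), bounding $|\Per(A')|\le|\Per(A)|+|D|$ before dividing through, is in fact slightly more careful than the paper's terse argument.
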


\noindent
To prove this we need the following lemma:

\begin{lemma}
Let $A \sim \mathcal{D}^{n \times n} (0,1)$. 
For all real $\mu$ such that $|\mu| < \frac{1}{\sqrt{n-1}}$ and any $K > 0$ we have:
\be
\Pr_A \left(
\Big| \Per (A + J \mu) -\Per(A)\Big| <  K \cdot n\cdot \sqrt{n!} \mu
\right)
\geq
1 - K^{-2}.
\ee
\label{lem:sqrtn}
\end{lemma}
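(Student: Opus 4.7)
The plan is to apply Chebyshev's inequality to $X := \Per(A+\mu J) - \Per(A)$ by bounding $\mathbb{E}_A[|X|^2]$. The two ingredients are a multilinear expansion of $\Per(A+\mu J)$ and the standard identity $\mathbb{E}[|\Per(B)|^2] = m!$ for an iid mean-zero, variance-one $m \times m$ matrix $B$, which for submatrices will yield a clean closed form for the second moment.

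First I would expand $\Per(A+\mu J) = \sum_\sigma \prod_i(A_{i,\sigma(i)}+\mu)$ by distributing the product and grouping terms according to the set $T$ of rows that contribute a $\mu$. Once $|T|=k$ is fixed, the remaining permutation data is an injection from $[n]\setminus T$ into $[n]$, and the $k!$ ways of extending it to a full permutation produce a combinatorial factor. This yields
\[
\Per(A+\mu J) = \sum_{k=0}^n \mu^k\, k! \sum_{|T|=|C|=k} \Per\!\bigl(A_{[n]\setminus T,\,[n]\setminus C}\bigr),
\]
and subtracting the $k=0$ term presents $X$ as a weighted sum of sub-permanents of $A$ indexed by nonempty $T$ and by column sets $C$ of matching size.

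Next I would compute $\mathbb{E}[|X|^2]$ by expanding the product $X\bar X$ and evaluating each $\mathbb{E}\!\bigl[\Per(A_{[n]\setminus T,[n]\setminus C})\,\overline{\Per(A_{[n]\setminus T',[n]\setminus C'})}\bigr]$. Writing each sub-permanent as a sum over bijections $\rho,\rho'$ and using independence of rows, any row $i$ whose entry $A_{i,\rho(i)}$ is not paired with the conjugate $\bar A_{i,\rho(i)}$ on the other side contributes an orphan mean-zero factor that forces the expectation to vanish. This forces $T=T'$, $C=C'$, and $\rho=\rho'$; the surviving $(n-k)!$ diagonal terms each contribute $1$, giving
\[
\mathbb{E}[|X|^2] = \sum_{k=1}^n \mu^{2k}(k!)^2 \binom{n}{k}^2 (n-k)! = (n!)^2 \sum_{k=1}^n \frac{\mu^{2k}}{(n-k)!}.
\]

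Finally, under the hypothesis $\mu^2 \le 1/(n-1)$, the ratio of consecutive summands equals $\mu^2(n-k+1)\le 1$, so the sequence $\mu^{2k}/(n-k)!$ is non-increasing in $k$ and the total is at most $n$ times the $k=1$ term. This yields $\mathbb{E}[|X|^2] \le n^2\, n!\,\mu^2$, and Chebyshev applied at threshold $K n \sqrt{n!}\,\mu$ gives the advertised probability bound $1/K^2$. The main obstacle is the cross-term bookkeeping in the second-moment step: one must verify carefully that whenever $(T,C,\rho) \ne (T',C',\rho')$ some entry of $A$ appears as an orphan on exactly one side, so that the mean-zero assumption alone (not requiring any higher-moment structure of the distribution) suffices to kill all off-diagonal contributions. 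The multilinear expansion in Step 1 and the geometric-series / Chebyshev finish are routine.
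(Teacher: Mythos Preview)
Your proposal is correct and follows essentially the same approach as the paper: both compute $\E_A[|X|^2]$ via the multilinear expansion of $\Per(A+\mu J)$ in $\mu$, use mean-zero independence to kill all cross terms and obtain $\E_A[|X|^2] = (n!)^2\sum_{k\ge 1}\mu^{2k}/(n-k)!$, bound this by $n$ times the $k=1$ term via the monotonicity $\mu^2(n-k)\le 1$, and finish with Chebyshev. The only difference is expository---you spell out the orphan-entry argument for the vanishing of cross terms explicitly, whereas the paper defers this to the proof of its second-moment Lemma~\ref{lemma:moment}.
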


\begin{proof}
The proof is by Markov inequality inequality. Let $\Delta:= \Big | \Per (A + J \mu) -\Per(A)\Big |$. Then Markov inequality is
\be
\Pr_A \Big(\Delta \geq K \sqrt{\E_A \left[\Delta^2\right]}\Big) \leq \frac{1}{K^2}.
\ee
Hence, it is sufficient to show that if $\mu < \frac{1}{\sqrt{n+1}}$ then 
\be
\E_A \Big[\Delta^2\Big] \leq  n! n^2 \mu^2.
\ee
To see this we write $\Per (A + \mu J)$ as a degree-$n$ polynomial in $\mu$: 
\be
\Per(A + \mu J) = 
\sum_{k=0}^n a_k \cdot \mu^k
\ee
Under this notation we get:
\be
\Delta = \Big| \sum_{k=0}^n a_k \cdot \mu^k - a_0 \mu^0\Big| = 
\Big| \sum_{k=1}^n a_k \cdot \mu^k \Big|
\ee
Therefore 
\be
\E_A \left[\Delta^2\right] = \sum_{i, j = 1 }^n \E_A a^\ast_i a_j \cdot \mu_i \mu_j
\ee
Using the pairwise independence of $a_i$ and $a_j$ (see the proof of lemma \ref{lemma:moment}), i.e., $\E_A a^\ast_i a_j \propto \delta_{i,j}$, we have that $\E_A \Delta^2 = \sum_{i = 1 }^n \E_A |a_i|^2 \cdot \mu^{2i}$. 

Again, using the observation in the proof of lemma \ref{lemma:moment}, by pair-wise independence of the permanent of different sub-matrices, i.e., 
$\E_A (\Per(C) \Per(B)) = 0$ when $C$ and $B$ are different sub-matrices:
\be
a_k = k! \sum_{B \subseteq_{n-k} A} \Per(B)
\ee
\be
 \Downarrow \nonumber
 \ee
\be
 \E_A |a_k|^2 = k!^2 \sum_{B \subseteq_{n-k} A }|\Per(B)|^2 = \frac{n!^2}{(n-k)!}.
\ee
Summing over these terms 
\ba
\E_A \left[\Delta^2\right] &= n!^2  \sum_{k=1}^n \frac{\mu^{2k}}{(n-k)!}\\
&=n!^2  \sum_{k=1}^n c_k,\\
\ea
and
\ba
 c_k:= \mu^{2k} / (n-k)!.
\ea
Finally we observe that if $\mu < \frac{1}{\sqrt{n-1}}$ the largest term in the sum above corresponds to $k=1$. To see this note for $ 1\leq k \leq n$
\be
\frac{c_{k+1}}{c_k} = \mu^2 (n-k)\leq \frac{n-k}{n-1}\leq 1.
\ee
Therefore
$\E_A \left[\Delta^2\right] \leq (n!)^2 \cdot n \cdot c_1 = n! n^2\mu^2$.
This completes the proof.

\end{proof}
\begin{proof}(of theorem \ref{thm:gpe}.) For the first part suppose that we have an oracle $\mathcal{O}$ that solves $\mathrm{GPE}^\mu_\pm (\eps, \delta)$. Then given an instance $A$ from $\mathrm{GPE}^0_\pm$ we just return the output $r$ of $\mathcal{O}$. Therefore using union bound, triangle inequality and Lemma \ref{lem:sqrtn} with probability at least $1- \delta - 1/K^2$
\ba
\Big |\Per(A) - r\Big| &\leq \Big |\Per(A) - \Per (A + \mu \cdot J)\Big| + \eps \sqrt{n!}\\
&\leq (\eps + K \cdot n\cdot \mu) \sqrt{n!}.
\ea

For the second part we consider the same reduction, and the proof is immediate from the first item by just observing that if with probability $1-\delta'$, $\sqrt{n!} \leq n^c \cdot |\Per (A)|$ then using union bound with probability at least $1- \delta -\delta'- 1/K^2$
\ba
\Big |\Per(A) - r\Big| &\leq 
(\eps + K \cdot n\cdot  \mu) \sqrt{n!}\\
&\leq 
(\eps + K \cdot n\cdot  \mu) n^c |\Per(A)|.
\ea
\end{proof}

\section{Proof of Lemma \ref{lem:int1}}
\label{section:lem20}

\textbf{Proof of Lemma \ref{lem:int1}}--
\begin{proof}
Let $c:= \frac{e^{\frac l m}}{\beta} \leq \frac{1}{\sqrt e}$:
\ba
\sum_{k = m}^\infty \beta^{-k} \cdot k^l &= \beta^{-m} \cdot m^l \sum_{k = m}^\infty \beta^{-(k-m)} \cdot (\frac k m)^l\\
&= \beta^{-m} \cdot m^l \sum_{k = m}^\infty \beta^{-(k-m)} \cdot (1 + \frac {k-m} m)^l\\
&= \beta^{-m} \cdot m^l \sum_{k = 0}^\infty \beta^{- k } \cdot (1 + \frac {k} m)^l\\
&\leq \beta^{-m} \cdot m^l \sum_{k = 0}^\infty \beta^{- k } \cdot e^{ \frac {k} m l}\\
&\leq \beta^{-m} \cdot m^l \sum_{k = 0}^\infty c^k\\
&= \beta^{-m} \cdot m^l \frac{1}{1-c}\\
&\leq 3\cdot \beta^{-m} \cdot m^l.
\ea

\end{proof}

\section{Average $\# P$-hardness for the exact computation of the permanent of a random Gaussian with vanishing mean}
\label{section:averagevanishing}
\textbf{Average $\# P$-hardness for the exact computation of the permanent of a random Gaussian with vanishing mean}--
Our result implies a quasi-polynomial time algorithm to approximate the permanent of a Gaussian matrix with vanishing but nonzero mean and variance $1$. One may wonder if a stronger statement can hold: ``Is there a quasi-polynomial time algorithm to compute the permanent of such matrix exactly on average''. In this appendix we prove that the answer is no, unless $P^{\# P} \subseteq \mathrm {TIME}(2^{\poly\log (n)})$. Our result therefore provides a natural example of a computational problem that is $\#P$-hard to compute exactly on average and efficient to approximate on average.

\begin{theorem}
[Average-case hardness for the permanent of a nonzero mean Gaussian] Let $\mu > 0$ and let $\mathcal{O}$ be the oracle that for any $\mu' \geq \mu$ computes the permanent of $7/8 +1/\poly(n)$ fraction of matrices from the ensemble ${\cal N}^{n \times n}(\mu', 1, \C)$ exactly. Then $P^{\mathcal{O}} = P^{\# P}$.
\label{thm:exacthardness}
\end{theorem}

What this theorem is saying is that using $\poly(n)$ queries to $\mathcal{O}$ and polynomial time computation one can compute the permanent for $3/4 + 1/\poly(n)$ fraction of $A \sim \mathcal {N}^{n\times n} (0 , 1, \C)$ for any $\mu'' < \mu$, including $\mu'' = 0$. In other words efficient computation of the permanent of a matrix with mean larger than a certain amount with high probability implies efficient computation of the permanent of random matrix with arbitrary mean with high probability.

\begin{table}
\captionsetup{font=scriptsize}
\captionsetup{width=13cm}
\centering
\captionsetup{width=.8\linewidth}
\begin{tabular}{ |c|  c|  c| }
\hline
   & worst case & average case \\
   \hline
 exact & $\#P$-hard & $\#P$-hard \\
 \hline
   approximate & $\#P$-hard & Efficient \\
   \hline
\end{tabular}
\caption{The computational complexity of computing the permanent of a complex Gaussian matrix with nonzero but vanishing mean.}
\label{table:2}
\end{table}

In order to show this we use the following result of Aaronson and Arkhipov \cite{AA13}
\begin{theorem}
[Aaronson Arkhipov \cite{AA13}] It is $\#P$-hard to compute the permanent of a Gaussian matrix with mean $0$ and variance $1$ for $3/4 + 1/\poly(n)$ fraction of matrices.
\label{thm:AAaverage}
\end{theorem}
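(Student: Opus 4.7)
The plan is a worst-case to average-case reduction in the style of Lipton: starting from an oracle $\mathcal{O}$ that exactly computes $\Per(A)$ for a $3/4 + 1/\poly(n)$ fraction of $A \sim \mathcal{N}(0,1,\C)^{n\times n}$, I would use $\mathcal{O}$ in polynomial time to compute $\Per(M)$ for an arbitrary integer matrix $M$, which is $\#P$-hard by Valiant's theorem. The classical Lipton interpolation trick works straightforwardly over finite fields because a random shift of a uniformly random matrix is itself uniformly random; the new difficulty here is that shifts of a Gaussian matrix are not Gaussian-distributed, so one must interpolate along a short segment on which the shifted distribution is close in total variation to the target Gaussian ensemble.

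Fix a worst-case $M$ with polynomially bounded entries. Sample $X \sim \mathcal{N}(0,1,\C)^{n\times n}$ and consider the univariate polynomial
\begin{equation}
q(t) := \Per(X + tM),
\end{equation}
which has degree at most $n$ in $t$, with leading coefficient exactly $\Per(M)$. For any fixed $|t| \leq \delta = 1/n^c$ with $c$ sufficiently large, the distribution of $X + tM$ differs from $\mathcal{N}(0,1,\C)^{n\times n}$ in total variation by $o(1/\poly(n))$, since entrywise one shifts a standard complex Gaussian by $tM_{ij} = O(1/n^{c-O(1)})$. Averaging the oracle's success probability over $X$ and over $t$ uniform in $[0,\delta]$, and then applying Markov's inequality, one obtains that with probability $1-o(1)$ over $X$, the oracle correctly returns $q(t)$ on at least a $3/4 + \Omega(1/\poly(n))$ fraction of $t \in [0,\delta]$. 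Conditional on such a ``good'' $X$, drawing $k = Cn$ independent samples $t_1,\ldots,t_k \in [0,\delta]$ and applying a Chernoff bound ensures that with high probability strictly more than a $3/4$-fraction of the evaluations $y_i := \mathcal{O}(X + t_i M)$ actually equal $q(t_i)$.

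Since $q$ has degree at most $n$ and $k \geq 2n$ with an error rate strictly below $1/4$, the Berlekamp--Welch algorithm (which uses only field operations and hence works verbatim over $\C$ when the oracle returns exact complex values) uniquely recovers $q$ from the noisy evaluations $\{(t_i, y_i)\}$. Reading off the degree-$n$ coefficient of $q$ then yields $\Per(M)$ exactly. The main obstacle I anticipate is handling the dependencies across queries, which all share the same underlying $X$: a naive Chernoff bound over queries does not apply, and the fix is precisely the two-step argument outlined above (average first over $X$ via Markov, then condition on a good $X$ to make the $t_i$'s independent). A secondary technical point is choosing the scale $\delta$ so that all $k$ query matrices $X + t_i M$ lie simultaneously within $o(1/\poly(n))$ total variation of $\mathcal{N}(0,1,\C)^{n\times n}$, which forces $\delta$ to scale inverse-polynomially with $n \cdot \|M\|_\infty$; keeping $\delta$ large enough that the $t_i$'s are spread out sufficiently to support polynomial reconstruction, while small enough to preserve the oracle's correctness rate, is the delicate balance that must be struck.
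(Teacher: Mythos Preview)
The paper does not give its own proof of this theorem; it is quoted as a result of Aaronson and Arkhipov \cite{AA13} and used as a black box in the proof of Theorem~\ref{thm:exacthardness}. Your sketch is essentially the argument from \cite{AA13}: Lipton-style interpolation along a short line, total-variation closeness of slightly shifted Gaussians to the standard Gaussian ensemble, and Berlekamp--Welch decoding to recover the degree-$n$ polynomial $q(t)=\Per(X+tM)$ and hence its leading coefficient $\Per(M)$.

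One quantitative point deserves correction: your Markov step does not yield ``with probability $1-o(1)$ over $X$'' that $X$ is good. From $\E_X[\text{fail fraction}]\le 1/4-1/\poly(n)$, Markov only gives $\Pr_X[\text{fail fraction}\ge 1/4]\le 1-\Omega(1/\poly(n))$, i.e.\ a good $X$ occurs with probability merely $\Omega(1/\poly(n))$, which is not enough to drive your subsequent Chernoff step. The clean fix (and what \cite{AA13} effectively do) is to skip the two-step decomposition: fix $k=\poly(n)$ distinct points $t_1,\dots,t_k\in[0,\delta]$, let $N$ be the number of oracle errors, note $\E_X[N]\le k(1/4-1/\poly(n))$, and apply Markov once to get $\Pr[N\ge (k-n)/2]\le 1/2-\Omega(1/\poly(n))$ for $k\gg n$. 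This yields a single-trial success probability of $1/2+1/\poly(n)$, which is then amplified by independent repetition and majority vote (the correct value $\Per(M)$ is unique). With this adjustment your argument is complete and matches the original.
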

and the following algorithm due to Berlekamp-Welch:
\begin{theorem} [Berlekamp-Welch Algorithm]
Let $q: \C \rightarrow \C$ be a univariate polynomial of degree $n$. Suppose we are given $m$ pairs $(x_1, y_1),...,(x_m, y_m)$ (with the
$x_i$'s all distinct), and are promised that $y_i = q (x_i)$ for more than $\frac{m+n} 2$
values of $i$. Then there is
a deterministic algorithm to reconstruct $q$, using $\poly (m,n)$ operations.
\label{thm:bwalgorithm}
\end{theorem}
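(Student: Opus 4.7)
The plan is to implement the classical Berlekamp--Welch reduction, which converts the decoding task into solving a single linear system of modest size. Set $e := \lfloor (m-n-1)/2 \rfloor$; the promise that $y_i = q(x_i)$ holds at strictly more than $(m+n)/2$ indices means that at most $e$ of the input pairs are ``in error''. The algorithm will seek two unknown polynomials: an \emph{error locator} $E(x)$ of degree at most $e$ and a \emph{numerator} $N(x)$ of degree at most $n+e$, satisfying the linear constraints
\be
N(x_i) \;=\; E(x_i)\,y_i, \qquad i=1,\dots,m.
\ee
Viewed in the coefficient basis this is a homogeneous linear system in $n + 2e + 2$ unknowns; Gaussian elimination returns a nonzero element of its kernel in $\mathrm{poly}(m,n)$ arithmetic operations.

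The correctness argument splits into feasibility and extraction of $q$. For feasibility, let $I \subseteq \{1,\dots,m\}$ denote the (unknown) set of error positions, and define
\be
E^\ast(x) \;:=\; \prod_{i\in I}(x-x_i), \qquad N^\ast(x) \;:=\; E^\ast(x)\,q(x).
\ee
Since $|I|\le e$, $\deg E^\ast \le e$ and $\deg N^\ast \le n+e$; at $i\in I$ the equation holds because $E^\ast(x_i)=0$, and at $i\notin I$ it holds because $y_i = q(x_i)$. This exhibits an explicit nonzero solution. For extraction, take any nonzero solution $(N,E)$ returned by the solver and consider
\be
P(x) \;:=\; N(x) - E(x)\,q(x),
\ee
a polynomial of degree at most $n+e$. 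At every correct index $i$ one has $N(x_i) = E(x_i)y_i = E(x_i)q(x_i)$, hence $P(x_i)=0$. A short arithmetic check confirms that for the chosen $e$ the number of correct indices strictly exceeds $n+e$, so $P$ has more roots than its degree and must vanish identically. Therefore $N = E\cdot q$ as polynomials, and a single polynomial long division recovers $q(x) = N(x)/E(x)$.

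The main obstacle is ensuring that the nonzero solution extracted from the kernel has $E\not\equiv 0$, since only then does the identity $N = Eq$ yield $q$. This is handled by a counting argument: if a solution had $E\equiv 0$, the constraint would force $N(x_i)=0$ at all $m$ distinct points $x_i$, while $\deg N \le n+e < m$, so $N\equiv 0$ as well; hence any nonzero kernel element automatically has $E\not\equiv 0$. Alternatively one can impose a normalization such as fixing a designated coefficient of $E$ to $1$, which remains feasible by the explicit $(N^\ast, E^\ast)$ above. Beyond this point the proof is elementary -- degree counts, linear algebra, and polynomial division -- and the overall running time stays polynomial in $m$ and $n$.
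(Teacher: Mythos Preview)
Your proof is correct and follows the standard Berlekamp--Welch argument: reduce to a homogeneous linear system for the error-locator/numerator pair, exhibit the explicit feasible solution $(N^\ast,E^\ast)$, and then use the degree-versus-root count on $P=N-Eq$ to force any nonzero kernel element to satisfy $N=Eq$. The handling of the $E\equiv 0$ case is also right; implicitly the hypothesis forces $m>n$ (otherwise ``more than $(m+n)/2$ correct out of $m$'' is vacuous), which is exactly what your bound $\deg N \le n+e < m$ needs.

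Regarding comparison with the paper: there is nothing to compare. The paper does not prove this theorem at all --- it is quoted as a black-box tool (attributed to Berlekamp--Welch) and invoked inside the proof of Theorem~\ref{thm:exacthardness}. So you have supplied a complete proof where the paper simply cites the result.
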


\begin{proof} [Proof of Theorem \ref{thm:exacthardness}]
The ensemble of the permanent of random Gaussian matrices with mean $\mu$ is according to
\be
\Per(J \cdot \mu + A), \hspace{1cm} A \sim \mathcal{N}^{n\times n} (0,1, \C).
\ee

Consider the following univariate polynomial in $\mu$
\be
q(\mu):= \Per(J \cdot \mu + A) = \sum_i c_i \mu^i.
\ee
Our objective is to compute
\be c_0 = \Per (A).\ee

Suppose that there is an oracle $\mathcal{O}$ that for any $\mu' > \mu$ computes $\Per (A)$ with probability at least $7/8 + 1/\poly(n)$ over $A \sim \mathcal{N}^{n \times n}(\mu', \C)$ exactly. Let $A \sim \mathcal{N}^{n \times n}(0,1, \C)$. Fix distinct valuess $\mu_1, \ldots, \mu_m > \mu$ for $m = \poly(n)$, and query the oracle on $A + J \mu_i$ for $i \in [m]$.
 Let
\be
N_{\text{wrong}} = \sum_{i=1}^m I(q(\mu_i) \neq \mathcal O(\mu_i)),
\ee
where $I (E)$ is the indicator of the event $E$.
We know that
\be
\E_A N_{\text{wrong}} \leq m/8 (1+1/\poly(n)).
\ee
Let $m = n^{2 + O(1)}$. 

Hence
\ba
\Pr_A (N_{\text{wrong}} \geq \frac{m - n}{2})
&\leq \frac{ m}{4(m - n)} (1+1/\poly(n))\\
&= \frac 14( 1+ 1/\poly(n)).
\ea
Using Theorem \ref{thm:bwalgorithm}, conditioned on the event $N_{\text{wrong}} < \frac{m - n}{2}$ there is a polynomial time algorithm that reconstructs $q$ and hence can find $\Per(A)$. This implies a procedure to compute $\Per(A)$ exactly for a fraction 
at least
$3/4+ 1/\poly(n)$
of such matrices. Using Theorem \ref{thm:AAaverage} we conclude that $P^{\mathcal O} = P^{\#P}$.

\end{proof}

\end{document}